\documentclass[12pt]{article}%
\usepackage{amsmath,amssymb,amsthm,amsfonts}
\usepackage{wasysym}
\usepackage{subcaption}
\usepackage{graphicx}
\usepackage{color}
\usepackage[colorlinks]{hyperref}
\usepackage{geometry}

\geometry{top = 1in, bottom = 1in, left = 1in, right = 1in}

\newtheorem{thm}{Theorem}
\newtheorem{cor}{Corollary}

\newcommand{\R}{\mathbb{R}}
\newcommand{\xb}{\hat{x}_*}
\newcommand{\wb}{\hat{w}_*}
\newcommand{\C}{\mathbb{C}}
\newcommand{\N}{\mathbb{N}}
\renewcommand{\d}{\hat{d}}
\newcommand{\lambar}{\bar{\lambda}}
\renewcommand{\Re}{\text{Re}}
\newcommand{\ea}{\textit{et al. }}

\begin{document}

\title{Neimark-Sacker bifurcations and evidence of chaos in a discrete dynamical model of walkers}
\author{Aminur Rahman, ar276@njit.edu, \url{web.njit.edu/~ar276}\\
Denis Blackmore, denis.l.blackmore@njit.edu\\ \\
{\small Department of Mathematical Sciences, New Jersey Institute of Technology; Newark, NJ 07102-1982}}
\date{}
\maketitle

\begin{abstract}
Bouncing droplets on a vibrating fluid bath can exhibit wave-particle behavior, such as being propelled by
interacting with its own wave field. These droplets seem to walk across the bath, and thus are dubbed
\emph{walkers}. Experiments have shown that walkers can exhibit exotic dynamical behavior indicative of chaos.
While the integro-differential models developed for these systems agree well with the experiments, they are 
difficult to analyze mathematically. In recent years, simpler discrete dynamical models have been derived and 
studied numerically. The numerical simulations of these models show evidence of exotic dynamics such as period 
doubling bifurcations, Neimark--Sacker (N--S) bifurcations, and even chaos. For example, in \cite{Gilet14}, based 
on simulations Gilet conjectured the existence of a supercritical N-S bifurcation as the damping factor in his one-
dimensional path model. We prove Gilet's conjecture and more; in fact, both supercritical and subcritical (N-S) 
bifurcations are produced by separately varying the damping factor and wave-particle coupling for all eigenmode
shapes. Then we compare our theoretical results with some previous and new numerical simulations, and find 
complete qualitative agreement. Furthermore, evidence of chaos is shown by numerically studying a global 
bifurcation.
\end{abstract}

Keywords:  hydrodynamic quantum analogs, bouncing droplets, chaos, bifurcations

Pacs numbers: 05.45.Ac, 
47.20.Ky, 
47.52.+j 

\section{Introduction}

Pilot wave theory is based upon the idea that particle trajectories drive the statistics seen in quantum mechanics. 
In the early days of quantum mechanics it was seen as a promising explanation of the statistics arising in 
experiments. Specifically the works of de Broglie \cite{deBroglie} and later Bohm \cite{BohmVigier}, while very 
different, showed potential as alternate formulations to that of the Copenhagen interpretation until the 1960s. In 
recent years, fluid dynamic experiments with droplets bouncing on a vibrating bath were observed to exhibit 
quantum-like behavior. This effectively revived the theory in the form of hydrodynamic quantum analogs and 
catalyzed the current very active state of walking droplet research. A summary of quantum-like behavior in 
experiments of walking droplets can be found in \cite{Bush10, Bush15a, Bush15b}.

In 2005 Couder and collaborators observed droplets moving across a vibrating fluid bath \cite{CPFB05}. The 
droplets accomplish this by exciting an eigenmode at each bounce, which generates a wave field, and in turn are 
propelled by interacting with this field
\cite{CPFB05, PBC2006}. Later, Couder \ea\cite{CouderFort06} and
Eddi \ea\cite{EFMC09, EMPFC12} observed quantum-like behavior such as single particle diffraction, tunneling, and 
the Zeeman effect. Furthermore, there were extensive experiments conducted on the quantization of orbits by 
Fort \ea\cite{FEBMC10}, Harris \ea
\cite{HMFCB13, HarrisBush14}, and Perrard \ea\cite{PLMFC14, PLFC14}.

The experimental work led to many mathematical models. Discrete path memory models (accompanied by 
experimental results) for quantized circular orbits and walking in free space were developed by Fort \ea
\cite{FEBMC10} and Eddi \ea\cite{ESMFRC11}, respectively. A detailed hydrodynamic model derived from free 
space experiments was developed by Molacek \ea\cite{MolBush13b}. Oza \ea took this further and developed 
integro-differential equation models for walkers in free space \cite{ORB13} and a circular rotating frame
\cite{OHRB14, OWHRB14}. Then more accurate models for the waves were developed by Milewski \ea
\cite{MGNB15} and coupled with the equations of motion of the droplet.

While many of these models agree very well with experiments, the equations are quite difficult to study 
analytically. The complexity of the equations naturally created interest in developing realistic simplified 
mathematical models exhibiting important dynamical features of the original, while being easier to analyze. The 
first models to exploit the discrete nature of the bouncing/walking were developed by Fort \ea\cite{FEBMC10} 
and later Eddi \ea\cite{ESMFRC11}. Several other such reduced models were developed and two of them,
devised by Shirokoff \cite{Shirokoff13} and Gilet
\cite{Gilet14}, showing considerable promise are actually planar
discrete dynamical system models. Shirokoff \cite{Shirokoff13} developed a model in which he derived a map for 
the motion of a particle in a square cavity. In this model, using numerics, he discovered cascading period doubling 
bifurcations indicative of chaos. Gilet
\cite{Gilet14} included the amplitude of subsequent modes in his
model (\ref{Eq: themodel}) for the straight line motion of a particle. He observed what appeared to be a Neimark--
Sacker (N-S) bifurcation in numerous simulations, and so conjectured its existence and type.

\begin{equation}
\begin{split}
w_{n+1} &= \mu[w_n + \Psi(x_n)],\\
x_{n+1} &= x_n - Cw_n\Psi'(x_n);
\end{split}
\label{Eq: themodel}
\end{equation}

\begin{figure}
\centering
\includegraphics[width = 0.9\textwidth]{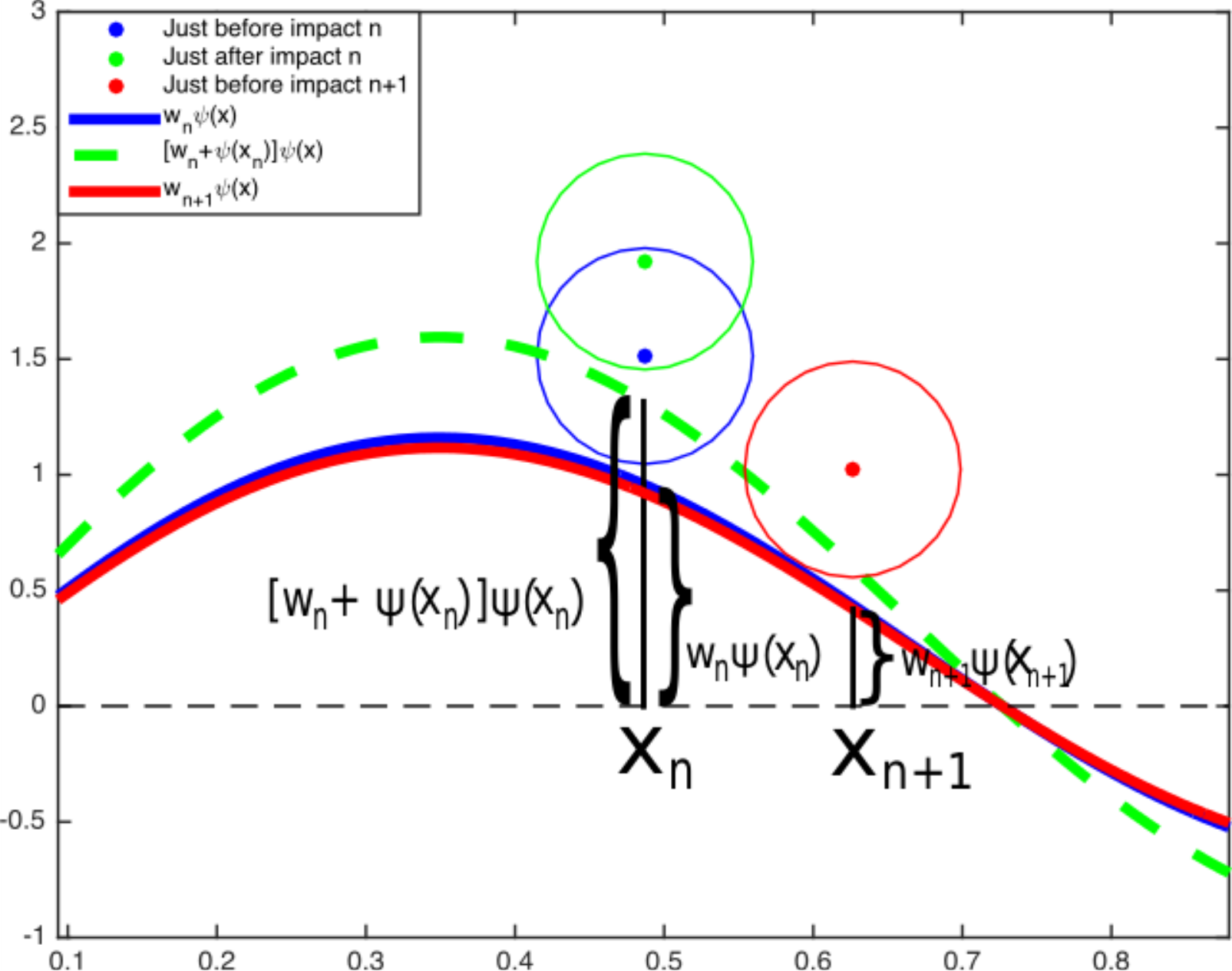}
\caption{Diagram of (\ref{Eq: themodel}).}
\label{Fig: Diagram}
\end{figure}

The model diagram is illustrated in Fig. \ref{Fig: Diagram}. In this model $w_n \in \R$ is the amplitude of the wave 
just before impact $n \in \N$, $x_n \in \R$ is the position of the walker at impact $n$, $C \in [0,1]$ represents 
wave-particle coupling (related to the size of the droplet), $\mu \in [0,1]$ is the damping factor (related to the 
path memory), and $\Psi \in \R$ is a single eigenmode of the Faraday wave field. From the definition given in 
\cite{Gilet14}, one can calculate the damping factor of mode $k$ with the associated memory $M_k$ as 
$\mu_k = \exp(-1/M_k)$. However, since this model has only one eigenmode, $k = 1$.  One may also interpret 
the damping factor as the ratio of the wave amplitude just before impact $n+1$ to the amplitude just after 
impact $n$.  Then, we think of $\mu = 0$ as a system with no memory and hence no walking.  Furthermore, $
\mu = 1$ corresponds to infinite memory and gives rise to Faraday waves; that is, there are waves present even 
when a drop is not.  In addition, $C = 0$ corresponds to a drop that is nonexistent and $C = 1$ represents a drop 
that is so large that the difference in position is exactly the gradient of the wave field at impact $n$.

While it is not possible to construct an exact experiment for this model because constraining the droplets to an 
annular or rectangular domain would cause boundary effects from the waves, we may think of this model as being 
analogous to or an approximation of some experimental setups. Two such experiments have been conducted by 
Filoux \ea \cite{FHV15, FHSV15}; the first being walkers in an annular domain and the second an upcoming work 
on walkers in a long-narrow rectangular domain.

Although the focus of this investigation is rigorous mathematical analysis of certain aspects of the model 
dynamics, in order to connect the theory with the physics we shall discuss the physical implications of the 
dynamical results. Furthermore, the theory developed will give us insight into experiments such as
\cite{FHV15, FHSV15} and perhaps even 2-dimensional free space
experiments due to the connections between discrete dynamical systems and continuous dynamical systems.

The remainder of this paper is organized as follows: In section
\ref{Sec: Basic} we prove the basic dynamical properties of
(\ref{Eq: themodel}). We analyze the map via dynamical systems and bifurcation theory in section \ref{Sec: NS} 
and formulate the conditions for it to have N--S bifurcations. In our analysis, we first vary $\mu$ while treating
$C$ as a constant, then vice versa, and find the value of $\mu$ and $C$ for which the eigenvalues have a 
modulus of unity and become complex conjugates. Then, we apply the genericity conditions for a Neimark--Sacker 
bifurcation to determine conditions on $\Psi$ which make the system a generic 2-D map. In section
\ref{Sec:  Test} we apply our theorem for the conditions on $\Psi$ to test functions that Gilet proposed. 
While we use the same form of test functions, we also study test function shapes different from those 
studied in \cite{Gilet14}. We end in section \ref{Sec: Homoclinic} by noting
an interesting homoclinic-like bifurcation that appears in our simulations.

\section{Basic properties of the map}\label{Sec: Basic}

In this section we find the fixed points as done in \cite{Gilet14}. Then we analyze the stability of the fixed points 
for two cases: holding $C$ constant and holding $\mu$ constant. It should be noted
that a family of fixed points is defined as a set of multiple fixed points,
and all calculations are done without loss of generality.

First, define $F: \R^2 \mapsto \R^2$ as,

\begin{equation}
\left[\begin{array}{c}
w_{n+1}\\
x_{n+1}
\end{array}\right] = F(w_n,x_n) = \left[\begin{array}{cc}
\mu[w_n + \Psi(x_n)]\\
x_n - Cw_n\Psi'(x_n)
\end{array}\right]\label{Eq: themap}
\end{equation}

We find the fixed points of (\ref{Eq: themodel}) by solving,

\begin{align*}
w_* &= \mu[w_* + \Psi(x_*)]\\
x_* &= x_* - Cw_*\Psi'(x_*).
\end{align*}

Notice that this gives us two families of fixed points:

\begin{align}
(0,x_*) \text{ such that } &\Psi(x_*) = 0;\; \mu \neq 0\, \text{ and}\label{Eq: fp1}\\ (\wb,\xb) 
\text{ such that } &\wb = \frac{\mu}{1-\mu}\Psi(\xb),
\Psi'(\xb) = 0;\; C \neq 0, \mu \neq 1.\label{Eq: fp2}
\end{align}

The derivative matrix for $F$ is,

\begin{equation}
DF(w,x) = \left(\begin{array}{cc}
\mu & \mu\Psi'(x)\\
-C\Psi'(x) & 1-Cw\Psi''(x)
\end{array}\right).\label{Eq: Jacobian}
\end{equation}

\subsection{Properties of $F$ in neighborhoods of the second family}

We first notice that for fixed points (\ref{Eq: fp2}), $\mu \in [0,1)$ and $C \in (0,1]$ because if $\mu = 1$ or 
$C = 0$ this fixed point does not exist. Substituting the fixed points (\ref{Eq: fp2}) into the derivative matrix 
(\ref{Eq: Jacobian}), we obtain

\begin{equation*}
DF(\wb,\xb) = \left(\begin{array}{cc}
\mu & 0\\
0 & 1-\frac{\mu C}{1-\mu}\Psi(\xb)\Psi''(\xb)
\end{array}\right).
\end{equation*}

Therefore, the eigenvalues are

\begin{equation*}
\lambda_1 = \mu,\, \lambda_2 = 1-\frac{\mu C}{1-\mu}\Psi(\xb)\Psi''(\xb)
\end{equation*}

Observe that since $\mu \in [0,1)$, if $\Psi(\xb)\Psi''(\xb) < 0$ the fixed point is a saddle.  However, if 
$\Psi(\xb)\Psi''(\xb) > 0$, we always have $\lambda_2 < 1$, so we must find a condition for which
$\lambda_2 = -1$.

If we hold $C$ constant we derive the condition $\hat{\mu}$, 
\begin{equation*}
\hat{\mu} = \frac{1}{1+C\Psi(\xb)\Psi''(\xb)/2}.
\end{equation*}
Then, for $\mu < \hat{\mu}$ this fixed point is a sink and for $\mu > \hat{\mu}$ this fixed point is a saddle.
If $\mu = \hat{\mu}$, the fixed point becomes nonhyperbolic.

If we hold $\mu$ constant we derive the condition $\hat{C}$, 
\begin{equation}
\hat{C} = \frac{2(1-\mu)}{\mu\Psi(\xb)\Psi''(\xb)}.
\end{equation}
If $C < \hat{C}$ the fixed points are stable, but if $C > \hat{C}$ the fixed points are saddles.
When $C = \hat{C}$, $\lambda_2 = -1$, which indicates the possibility of a flip bifurcation.

When the fixed point is a saddle, it is easy to see that the linear stable and unstable manifolds are, 
\begin{align}
W^s_{\text{lin}} &= \{(w,x): x = \xb\},\\ W^u_{\text{lin}} &= \{(w,x): w = \wb \frac{\mu}{1-\mu}\Psi(\xb)\}.
\end{align}

\subsection{Properties of $F$ in neighborhoods of the first family}\label{Sec: fp1}

Since we can translate any fixed point (\ref{Eq: fp1}) to the origin, without loss of generality, we assume the fixed 
point is $(w_*,x_*) = (0,0)$ with the relevant conditions on $\Psi'$. Substituting the fixed point (\ref{Eq: fp1}) 
into the derivative matrix (\ref{Eq: Jacobian}), we obtain
 
\begin{equation}
DF(0,0) = \left(\begin{array}{cc}
\mu & \mu\Psi'(0)\\
-C\Psi'(0) & 1
\end{array}\right).
\end{equation}

The characteristic polynomial is 
\begin{equation*}
(\mu-\lambda)(1-\lambda)+C\mu\Psi'(0)^2 = 0 \Rightarrow
\lambda^2 - (1+\mu)\lambda + \mu(1+C\Psi'(0)^2) = 0,
\end{equation*}
and the eigenvalues are,
\begin{equation}
\lambda = \frac{1}{2}(1+\mu) \pm \frac{i}{2}\sqrt{4\mu(1+C\Psi'(0)^2)-(1+\mu)^2}.
\end{equation}
Notice that $\lambda \in \C\setminus\R$ if $4\mu(1+C\Psi'(0)^2)-(1+\mu)^2>0$.

Holding $C$ constant gives us the condition, 
\begin{equation}
2C\Psi'(0)^2 - 2\sqrt{C\Psi'(0)^2(C\Psi'(0)^2+1)} + 1< \mu < 2C\Psi'(0)^2 + 2\sqrt{C\Psi'(0)^2(C
\Psi'(0)^2+1)} + 1
\end{equation}
Now, if $|\lambda|<1$, we get a stable focus, and when $|\lambda|>1$ we get an unstable focus. This indicates 
that there may be a N--S bifurcation at the fixed point when $|\lambda|$ passes through unity, which occurs when 
$\mu$ goes from $\mu \leq \mu_* : = 1/(1+C\Psi'(0)^2)$ to $\mu > \mu_*$.

If we hold $\mu$ constant we have, 
\begin{equation}
C > \frac{1}{\Psi'(0)^2}\left[\frac{(1+\mu)^2}{4\mu} - 1\right],
\end{equation}
and plugging in this condition shows that the fixed points are always stable when the eigenvalue is real. When the 
eigenvalues are purely complex conjugates the fixed points undergo N--S bifurcations, similar to the constant $C$ 
case. The eigenvalue passes through unity when $C_* = (1/\mu - 1)/\Psi'(0)^2$.

\section{Neimark--Sacker bifurcation}\label{Sec: NS}

In a continuous two-dimensional dynamical system, a Hopf bifurcation occurs when a spiral fixed point changes 
stability and produces a limit cycle as a parameter is varied. The bifurcation is supercritical if a stable limit cycle 
arises as the parameter is varied forward. Conversely, the bifurcation is subcritical if an unstable limit cycle is 
created as we vary the parameter backward. Similarly, for a discrete two-dimensional map a N--S bifurcation 
occurs in the same manner. Due to the discrete nature of the dynamical system, an invariant closed Jordan curve 
(a topological circle) is born instead of a limit cycle.

The existence of a N--S bifurcation at a specific fixed point for a single parameter can be shown by verifying the 
relevant genericity conditions and showing that the eigenvalues are complex conjugates of modulus one as the 
parameter passes through the critical value. The genericity condition is pivotal in proving a bifurcation (a 
topological change in the set of iterates) exists. Furthermore, the complex conjugate eigenvalues prove that the 
fixed point is a foci before and after the bifurcation, thereby allowing the creation of an invariant circle.

We consider the genericity conditions outlined in \cite{Kuznetsov}.  That is, the map must be locally conjugate 
near the fixed point to a specific normal form, there are no strong resonances, and the first Lyapunov coefficient 
must be nonzero.  The Lyapunov coefficient also determines if the bifurcation is supercritical (if negative) or 
subcritical (if positive).

While the N--S bifurcation is a local phenomenon about a single fixed point, it should be noted that all calculations 
are done without loss of generality. Therefore, we can apply the theorem to individual fixed points, some of which 
will yield N--S bifurcations.

\subsection{Neimark--Sacker bifurcation in the parameter $\mu$}\label{Sec: NS-mu}

In \cite{Gilet14}, Gilet conjectured that a supercritical N--S bifurcation occurs at the fixed points (\ref{Eq: fp1}). 
He also observed evidence of this in the iterates of the map for chosen test functions. Here we prove the map 
(\ref{Eq: themap}) is generic and a N--S bifurcation occurs as we vary the parameter $\mu$ at the fixed point 
(c.f. \cite{Kuznetsov, Neimark, Sacker}). We also show the map allows for both supercritical and subcritical N--S 
bifurcations.

\begin{thm}
The map (\ref{Eq: themap}) is generic about some fixed point $(w_*,x_*)$ if the eigenmode satisfies the 
following property, 
\begin{align}
\d = &\Psi'''(x_*)\Psi'(x_*)(1+C\Psi'(x_*)^2)(1
+2C\Psi'(x_*)^2)(4+3C\Psi'(x_*)^2)\nonumber\\ &+ 2\Psi''(x_*)^2\left\{5+C\Psi'(x_*)^2\left[1\right.\right.
\left.\left.-C\Psi'(x_*)^2\left(31+21C\Psi'(x_*)^2\right)\right]\right\} \neq 0.
\end{align}
and a Neimark--Sacker bifurcation occurs at the fixed points (\ref{Eq: fp1}) when 
\begin{equation}
\mu = \mu_* = \frac{1}{1+C\Psi'(x_*)^2}.
\end{equation}
Furthermore, if $\d < 0$, the map undergoes a supercritical Neimark--Sacker bifurcation, and if $\d > 0$, the map 
undergoes a subcritical Neimark--Sacker bifurcation in a neighborhood of the fixed point.
\label{Thm: thetheorem}
\end{thm}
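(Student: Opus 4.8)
My plan is to verify the three genericity conditions for a Neimark--Sacker bifurcation from \cite{Kuznetsov}: transversal crossing of the unit circle, absence of the strong resonances $\lambda^k=1$ for $k\in\{1,2,3,4\}$, and non-vanishing of the first Lyapunov coefficient $a$, whose sign then fixes the criticality. All work is carried out at the translated fixed point $(0,0)$ with $\Psi(0)=0$, where the Jacobian $DF(0,0)$ and its eigenvalues $\lambda=\tfrac{1}{2}(1+\mu)\pm\tfrac{i}{2}\sqrt{4\mu(1+C\Psi'(0)^2)-(1+\mu)^2}$ are already in hand. Throughout I assume $\Psi'(0)\neq0$ and $C>0$, which is exactly what forces the eigenvalues to be complex at $\mu=\mu_*$.

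The first two conditions follow almost for free from the structure already established. Since $\lambda\bar\lambda=\det DF(0,0)=\mu(1+C\Psi'(0)^2)$, one has $|\lambda(\mu)|^2=\mu(1+C\Psi'(0)^2)$, whence $\frac{d}{d\mu}|\lambda|^2=1+C\Psi'(0)^2>0$; the eigenvalues therefore cross $|\lambda|=1$ transversally at $\mu_*$. Writing $\lambda=e^{i\theta_0}$ with $\cos\theta_0=\Re\lambda=\tfrac{1}{2}(1+\mu_*)$ and using $\mu_*=1/(1+C\Psi'(0)^2)\in(0,1)$, I get $\cos\theta_0\in(\tfrac{1}{2},1)$, i.e. $\theta_0\in(0,\pi/3)$. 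This interval misses every resonant angle $0,\pi/2,2\pi/3,\pi$, so none of $\lambda,\lambda^2,\lambda^3,\lambda^4$ equals $1$ and the no-strong-resonance condition holds automatically.

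The real content is the Lyapunov coefficient. Following the reduction to the critical eigenspace in \cite{Kuznetsov}, I would choose a right eigenvector $q$ with $DF(0,0)q=\lambda q$ and an adjoint eigenvector $p$ with $DF(0,0)^{T}p=\bar\lambda p$, normalized so that $\langle p,q\rangle=1$. Taylor-expanding $\Psi(x)=\Psi'(0)x+\tfrac{1}{2}\Psi''(0)x^2+\tfrac{1}{6}\Psi'''(0)x^3+\cdots$ (and likewise $\Psi'$) isolates the quadratic and cubic multilinear forms $B$ and $C$ of the nonlinearity; because the only nonlinear terms are $\mu\Psi(x)$ in the first component and $-Cw\Psi'(x)$ in the second, $B$ is governed by $\Psi''(0)$ and $C$ by $\Psi'''(0)$. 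Projecting onto $p$ gives $g_{20}=\langle p,B(q,q)\rangle$, $g_{11}=\langle p,B(q,\bar q)\rangle$, $g_{02}=\langle p,B(\bar q,\bar q)\rangle$, $g_{21}=\langle p,C(q,q,\bar q)\rangle$, which I feed into the standard expression
\begin{equation*}
a=\Re\!\left(\frac{e^{-i\theta_0}g_{21}}{2}\right)-\Re\!\left(\frac{(1-2e^{i\theta_0})e^{-2i\theta_0}}{2(1-e^{i\theta_0})}\,g_{20}g_{11}\right)-\tfrac{1}{2}|g_{11}|^2-\tfrac{1}{4}|g_{02}|^2.
\end{equation*}
Here $1-e^{i\theta_0}\neq0$ precisely because $\lambda\neq1$ was established above.

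The hard part will be collapsing this expression into a usable form. Substituting the explicit eigenvectors $p,q$ and using the critical relations $|\lambda|=1$ and $\cos\theta_0=\tfrac{1}{2}(1+\mu_*)$ with $\mu_*=1/(1+C\Psi'(0)^2)$ to eliminate $\theta_0$, I expect $a$ --- most plausibly with computer-algebra assistance --- to factor as $a=K\,\hat{d}$, where $\hat{d}$ is exactly the polynomial in $\Psi'(0),\Psi''(0),\Psi'''(0)$ appearing in the statement and $K=K(\mu_*,C,\Psi'(0))$ is a scalar prefactor. Granting this identity, $\hat{d}\neq0$ is equivalent to $a\neq0$, which is genericity; the criticality then follows from the sign, since with the eigenvalues crossing outward $a<0$ is supercritical and $a>0$ subcritical in the convention of \cite{Kuznetsov}. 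It therefore remains to confirm that $K>0$ uniformly over the admissible range ($\mu_*\in(0,1)$, $C\in(0,1]$), which yields $\hat{d}<0\Rightarrow$ supercritical and $\hat{d}>0\Rightarrow$ subcritical. Establishing the positivity of $K$ and carrying the $g_{jk}$ bookkeeping through without algebraic error is where the genuine difficulty lies.
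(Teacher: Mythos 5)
Your proposal follows essentially the same route as the paper: the same transversality computation via $|\lambda|^2=\mu(1+C\Psi'(0)^2)$, the same (in fact slightly cleaner) exclusion of strong resonances via $\cos\theta_0=\tfrac{1}{2}(1+\mu_*)\in(\tfrac{1}{2},1)$, and the same projection onto the critical eigenspace feeding $g_{20},g_{11},g_{02},g_{21}$ into Kuznetsov's first Lyapunov coefficient formula. The one step you defer --- explicitly evaluating the $g_{jk}$ and checking that the scalar prefactor is positive --- is exactly what the paper's condition (C.3) carries out, arriving at $d(0)=K\hat{d}$ with $K=C^2\Psi'(0)^2/\bigl[4(1+C\Psi'(0)^2)^4(4+3C\Psi'(0)^2)\bigr]>0$, which confirms the factorization you anticipated.
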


\begin{proof}
As in Section \ref{Sec: fp1}, without loss of generality we translate the fixed point to $(w_*,x_*)=(0,0)$.  We 
showed in Section \ref{Sec: fp1} that the pair of eigenvalues $\lambda$ are complex conjugates if 
\begin{equation}
2C\Psi'(0)^2 - 2\sqrt{C\Psi'(0)^2(C\Psi'(0)^2+1)} + 1< \mu 
< 2C\Psi'(0)^2 + 2\sqrt{C\Psi'(0)^2(C\Psi'(0)^2+1)} + 1
\end{equation}
and $|\lambda|=1$ when 
\begin{equation*}
\mu = \mu_* = \frac{1}{1+C\Psi'(0)^2}.
\end{equation*}
This shows that a Neimark--Sacker bifurcation occurs at the fixed point if the map is generic.

Next we show the map is generic (c.f. \cite{Kuznetsov, Neimark, Sacker})
via three conditions (C.1), (C.2), and (C.3),

\begin{itemize}

\item[(C.1)]  We show that $r'(\mu_*) \neq 0$, where $r = |\lambda| = \sqrt{\mu\left(1+C\Psi'(0)^2\right)}$

Notice, since $r(\mu_*) = 1$,

\begin{equation*}
\frac{d}{d\mu}\left(r(\mu)^2\right)\bigg|_{\mu = \mu_*} = 2r(\mu_*)r'(\mu_*) = 2r'(\mu_*),
\end{equation*}

so if $\frac{d}{d\mu}\left(r(\mu)^2\right)\bigg|_{\mu = \mu_*} \neq 0$, $r'(\mu_*) \neq 0$.
Then, since $C > 0$ and $\Psi'(0) \in \R$,

\begin{equation}
\frac{d}{d\mu}\left(r(\mu)^2\right)\bigg|_{\mu = \mu_*} = (1+C\Psi'(0)^2) \neq 0.
\end{equation}

This shows that the transversality condition is satisfied.

\item[(C.2)]  We show the arguments of the eigenvalues satisfy the first set of nondegeneracy
conditions for a N--S bifurcation.

Let $\theta_* = \tan^{-1}A$, where

\begin{equation}
A = \frac{\sqrt{4\mu_*(1+C\Psi'(0)^2)-(1+\mu_*)^2}}{1+\mu_*} = \frac{\sqrt{4-(1+\mu_*)^2}}{1+\mu_*}
\end{equation}

Observe that $\theta_* = 0$ if $A = 0$, $\theta_* = \pm \pi$ if $A = 0$, $\theta_* = \pm 2\pi/3$
if $A = \pm \sqrt{3}$, and $\theta_* \rightarrow \pm \pi/2$ as $A \rightarrow \pm\infty$.
Since $A$ is clearly positive and bounded, this rules out each case except $A = \sqrt{3}$.
In order to get $\sqrt{3}$ we need $1+\mu = 1$; however, since $\mu$ is positive this is not possible.

Thus the first nondegeneracy condition is satisfied.

\item[(C.3)]  We compute the normal form for the N--S bifurcation and derive
the conditions for which the second nondegeneracy condition is satisfied.

First, we compute the eigenvectors: $(DF)q = \lambda q$ and $(DF)^Tp = \lambar p$,

\begin{equation}
q = \left(\begin{array}{c}
\mu_*\Psi'(0)\\
\lambda - \mu_*
\end{array}\right) \text{ and } p = \left(\begin{array}{c}
\lambar-1\\
\mu_*\Psi'(0)
\end{array}\right)
\end{equation}

For the normalization, we take

\begin{equation}
<p,q> = \bar{p}\cdot q = i\mu_*\Psi'(0)\sqrt{4-(1+\mu_*)^2}.
\end{equation}

In order to simplify the calculations for the normal form let us change variables from $\R^2$ to $\C$

\begin{equation}
\left(\begin{array}{c}
w\\ x
\end{array}\right) = zq + \bar{z}\bar{q} = \left[\begin{array}{c}
(z+\bar{z})\mu_*\Psi'(0)\\ (\lambda - \mu_*)z + (\lambar - \mu_*)\bar{z}
\end{array}\right]
\end{equation}

Substituting this into $F$ yields

\begin{align}
F = &\left[\begin{array}{c} F_1\\ F_2
\end{array}\right];\;\text{where}\\
F_1 = &(z + \bar{z})\mu_*^2\Psi'(0) + \mu_*\Psi((\lambda - \mu_*)z
+ (\lambar - \mu_*)\bar{z}),\nonumber\\ F_2 = &(\lambda - \mu_*)z + (\lambar - \mu_*)\bar{z}
- C(z+\bar{z})\mu_*\Psi'(0)\Psi'((\lambda-\mu_*)z+(\lambar-\mu_*)\bar{z}).\nonumber
\end{align}

Now we take the inner product (in order to transform the map into the normal form)

\begin{align}
<p,F> = &(z+\bar{z})\mu_*^2(\lambda - 1)\Psi'(0) + \mu_*(\lambda - 1)\Psi((\lambda-\mu_*)z
+(\lambar-\mu_*)\bar{z}) + (\lambda-\mu_*)\mu_*\Psi'(0)z\nonumber\\
&+ (\lambar-\mu_*)\mu_*\Psi'(0)\bar{z}
- C\mu_*^2\Psi'(0)^2(z+\bar{z})\Psi'((\lambda-\mu_*)z+(\lambar-\mu_*)\bar{z}).
\end{align}

Finally, to get the normal form we divide this through by $<p,q>$ for the sake of normalization and take the 
Taylor series in order to ignore higher order terms,

\begin{align}
H = &\left[(z+\bar{z})\mu_*^2(\lambda - 1)\Psi'(0) + \mu_*(\lambda - 1)\sum_{j+k\geq 1}\frac{1}{j!k!}
\partial_{z^j\bar{z}^k}\Psi(0)z^j\bar{z}^k\right.
+ (\lambda-\mu_*)\mu_*\Psi'(0)z \nonumber\\ &+ (\lambar-\mu_*)\mu_*\Psi'(0)\bar{z}
\left. -C\mu_*^2\Psi'(0)^2(z+\bar{z})\left(\Psi'(0) + \sum_{j+k\geq 1}\frac{1}{j!k!}\partial_{z^j\bar{z}^k}
\Psi'(0)z^j\bar{z}^k\right)\right]\nonumber\\
&\bigg/\left(i\mu_*\Psi'(0)\sqrt{4-(1+\mu_*)^2}\right),
\end{align}

where $\partial_{z^j\bar{z}^k}\Psi(0) := \partial_{z^j\bar{z}^k}\Psi((\lambda-\mu_*)z
+(\lambar-\mu_*)\bar{z})|_{(z,\bar{z})=(0,0)}$, and similarly for $\partial_{z^j\bar{z}^k}\Psi'(0)$.

By matching linear terms, it is easy to show the normal form can be written as,

\begin{align}
H = \lambda z + &\left[(\lambda - 1)\sum_{j+k\geq 2}\frac{1}{j!k!}
\partial_{z^j\bar{z}^k}\Psi(0)z^j\bar{z}^k
- C\mu_*\Psi'(0)^2(z+\bar{z})\sum_{j+k\geq 1}\frac{1}{j!k!}\partial_{z^j\bar{z}^k}
\Psi'(0)z^j\bar{z}^k\right]\nonumber\\
&\bigg/\left(i\Psi'(0)\sqrt{4-(1+\mu_*)^2}\right).
\end{align}

We are ready now to compute the nondegeneracy condition required to satisfy the final genericity condition.
From \cite{Kuznetsov, Sacker}, we have the formula,

\begin{equation}
d(0) = \Re\left(\frac{\lambar g_{21}}{2}\right)
+ \Re\left(\frac{\lambar(\lambar-2)}{2(\lambda-1)}g_{20}g_{11}\right)
- \frac{1}{2}|g_{11}|^2 - \frac{1}{4}|g_{02}|^2.
\end{equation}

where $g_{jk}/(j!k!)$ is the coefficient of the $z^j\bar{z}^k$ term.  We compute the relevant terms

\begin{align*}
g_{20} = &\frac{(\lambda - 1)(\lambda - \mu_*)^2\Psi''(0) - 2C\mu_*\Psi'(0)^2(\lambda - \mu_*)\Psi'(0)} {i\sqrt{4-(1+\mu_*)^2}\Psi'(0)}\\ g_{02} = &\frac{(\lambda - 1)(\lambar - \mu_*)^2\Psi''(0) - 2C\mu_*\Psi'(0)^2(\lambar - \mu_*)\Psi'(0)} {i\sqrt{4-(1+\mu_*)^2}\Psi'(0)}\\ g_{11} = &\frac{(\lambda - 1)(\lambda - \mu_*)(\lambar-\mu_*)\Psi''(0)}{i\sqrt{4-(1+\mu_*)^2}\Psi'(0)} - \frac{C\mu_*\Psi'(0)^2(\lambda - \mu_* + \lambar - \mu_*)\Psi'(0)}{i\sqrt{4-(1+\mu_*)^2}\Psi'(0)}\\ g_{21} = &\frac{(\lambda - 1)(\lambda - \mu_*)^2(\lambar-\mu_*)\Psi'''(0)}{i\sqrt{4-(1+\mu_*)^2}\Psi'(0)} - \frac{C\mu_*\Psi'(0)^2((\lambda - \mu_*)^2 + 2(\lambda-\mu_*)(\lambar - \mu_*))\Psi'''(0)}{i\sqrt{4-(1+\mu_*)^2}\Psi'(0)}
\end{align*}

If we factor out certain terms this formula greatly simplifies to

\begin{equation}
d(0) = \frac{\Psi'''(0)}{2\Psi'(0)}\frac{\Re(-i\lambar\hat{g}_{21})}{\sqrt{4-(1+\mu_*)^2}} -\frac{\Psi''(0)^2}{\Psi'(0)^2}\frac{\Re\left(\frac{\lambar(\lambar-2)}{2(\lambda-1)}
\hat{g}_{20}\hat{g}_{11}\right)}{(4-(1+\mu_*)^2)}
- \frac{\Psi''(0)^2}{\Psi'(0)^2}\frac{\frac{1}{2}|\hat{g}_{11}|^2 + \frac{1}{4}|\hat{g}_{02}|^2}{(4-(1+\mu_*)^2)},
\end{equation}

where

\begin{align}
\hat{g}_{20} = (\lambda - \mu_*)&\left[\lambda^2 + \mu_* - 2C\Psi'(0)^2\mu_*- \lambda(1+\mu_*)\right]\\
\hat{g}_{02} = (\lambar - \mu_*)&\left[1+\mu_* - 2C\Psi'(0)^2\mu_* - \lambda\mu_* - \lambar\right]\\
\hat{g}_{11} = -1+\lambda +&\lambda\mu_* - C\Psi'(0)^2\lambda\mu_* - \lambda^2\mu_*
+ \lambda\mu_* - C\Psi'(0)^2\lambar\mu_*\nonumber\\
&- \mu_* - \mu_*^2 + 2C\Psi'(0)^2\mu_*^2 + \lambda\mu_*^2\\
\hat{g}_{21} = (\mu_* - \lambda)&\left[1-\lambda - \lambda\mu_* + C\Psi'(0)^2\lambda\mu_*
+\lambda^2\mu_* - \lambar\mu_* + 2C\Psi'(0)^2\lambar\mu_*\right.\nonumber\\
&\left. + \mu_* + \mu_*^2 -3C\Psi'(0)^2\mu_*^2 - \lambda\mu_*^2\right].
\label{Eq: SameArgument}
\end{align}

Then, substituting in for $\lambda$ and $\mu_*$ gives

\begin{align}
d(0) = C^2\Psi'(0)^2\bigg[&\Psi'''(0)\Psi'(0)(1+C\Psi'(0)^2)(1+2C\Psi'(0)^2)(4+3C\Psi'(0)^2)\nonumber\\
&+ 2\Psi''(0)^2\left\{5 +C\Psi'(0)^2\left[1 C\Psi'(0)^2\left(31+21C\Psi'(0)^2\right)\right]\right\}\bigg]\nonumber\\ &\bigg/\left[4(1+ C\Psi'(0)^2)^4(4+3C\Psi'(0)^2)\right].
\end{align}

Since

\begin{equation*}
\frac{C^2\Psi'(0)^2}{4(1+C\Psi'(0)^2)^4(4+3C\Psi'(0)^2)} > 0,
\end{equation*}

we need only be concerned with

\begin{align}
\d := &\Psi'''(x_*)\Psi'(x_*)(1+C\Psi'(x_*)^2)(1+2C\Psi'(x_*)^2)(4+3C\Psi'(x_*)^2)\nonumber\\
&+ 2\Psi''(x_*)^2\left\{5+C\Psi'(x_*)^2\left[1-C\Psi'(x_*)^2\left(31+21C\Psi'(x_*)^2\right)\right]\right\}
\end{align}

Consequently, the N--S criteria imply that the bifurcation occurs if $\d \neq 0$,
and it is supercritical for $\d < 0$ and subcritical for $\d > 0$.

This shows that for certain properties of $\Psi$ the second nondegeneracy condition is satisfied,
thereby completing the proof.

\end{itemize}

\end{proof}

\subsection{Neimark--Sacker bifurcation in the parameter $C$}\label{Sec: NS-C}

In this section we prove the map (\ref{Eq: themap}) is generic and a N--S bifurcation
occurs as we vary the parameter $C$ at the fixed point.

\begin{cor}
The map (\ref{Eq: themap}) is generic about some fixed point $(w_*,x_*)$
if the eigenmode satisfies the following property,
\begin{align}
\d = \Psi'(x_*)\Psi'''(x_*)\left(6-\mu-\mu^2\right) + 2\Psi''(x_*)^2\left(32\mu-6\mu^3-21\right) \neq 0.
\end{align}
and a N--S bifurcation occurs at the fixed points (\ref{Eq: fp1}) when
\begin{equation}
C = C_* = \frac{1}{\Psi'(x_*)^2}\left(\frac{1}{\mu}-1\right).
\end{equation}
Furthermore, if $\d < 0$, the map undergoes a supercritical N--S bifurcation, and if $\d > 0$,
the map undergoes a subcritical N--S bifurcation in a neighborhood of the fixed point.
\label{Thm: cor}
\end{cor}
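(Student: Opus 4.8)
The plan is to establish Corollary~\ref{Thm: cor} by mirroring the structure of the proof of Theorem~\ref{Thm: thetheorem}, since the only difference is which parameter serves as the bifurcation parameter. The fixed point family~(\ref{Eq: fp1}), the Jacobian~(\ref{Eq: Jacobian}) at $(0,0)$, and the characteristic polynomial $\lambda^2 - (1+\mu)\lambda + \mu(1+C\Psi'(0)^2)=0$ are all unchanged; what changes is that I now treat $\mu$ as fixed and let $C$ vary through the critical value $C_* = (1/\mu - 1)/\Psi'(0)^2$, at which point $\mu(1+C_*\Psi'(0)^2) = 1$ forces $|\lambda| = 1$. So first I would verify the transversality condition (C.1) in the new parameter: since $r(C)^2 = |\lambda|^2 = \mu(1 + C\Psi'(0)^2)$, I compute
\begin{equation*}
\frac{d}{dC}\left(r(C)^2\right)\bigg|_{C = C_*} = \mu\Psi'(0)^2,
\end{equation*}
which is nonzero because $\mu \in (0,1)$ and $\Psi'(0) \in \R\setminus\{0\}$, so $r'(C_*) \neq 0$.

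Next I would dispatch the strong-resonance condition (C.2). At $C = C_*$ the eigenvalues are $\lambda = \frac{1}{2}(1+\mu) \pm \frac{i}{2}\sqrt{4-(1+\mu)^2}$, which is \emph{identical} to the expression at $\mu = \mu_*$ in the theorem, because both critical conditions enforce $\mu(1+C\Psi'(0)^2)=1$ and hence the same $\Re(\lambda) = (1+\mu)/2$ and the same modulus. Consequently the argument $\theta_* = \tan^{-1}A$ with $A = \sqrt{4-(1+\mu)^2}/(1+\mu)$ is the same, and the resonances $\lambda^k = 1$ for $k=1,2,3,4$ are ruled out by exactly the reasoning already given: $A$ is positive and bounded, so the only dangerous value $A = \sqrt{3}$ would require $1+\mu = 1$, impossible for $\mu > 0$. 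I expect this step to transfer essentially verbatim.

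The substantive computation is condition (C.3), the first Lyapunov coefficient. The eigenvectors $q,p$, the pairing $\langle p,q\rangle$, the change of variables to $\C$, and the intermediate quantities $\hat g_{20},\hat g_{02},\hat g_{11},\hat g_{21}$ are all expressed in terms of $\lambda$, $\mu_*$, $C$, and $\Psi$-derivatives; I would reuse the $d(0)$ formula and the simplified coefficient expressions up through~(\ref{Eq: SameArgument}). The key difference is in the final substitution: rather than eliminating $\mu_*$ via $\mu_* = 1/(1+C\Psi'(0)^2)$, I would use the relation $C\Psi'(0)^2 = 1/\mu - 1$, i.e. $1 + C\Psi'(0)^2 = 1/\mu$, to express everything in terms of $\mu$ alone. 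After substituting $\lambda = \frac{1}{2}(1+\mu) + \frac{i}{2}\sqrt{4-(1+\mu)^2}$ and collecting terms, $d(0)$ should factor as a manifestly positive prefactor times
\begin{equation*}
\Psi'(x_*)\Psi'''(x_*)\left(6-\mu-\mu^2\right) + 2\Psi''(x_*)^2\left(32\mu-6\mu^3-21\right),
\end{equation*}
whose sign alone determines super- versus subcriticality.

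The hard part will be the algebraic simplification in (C.3): the raw $d(0)$ expression involves products of complex quantities with the irrational modulus factor $\sqrt{4-(1+\mu)^2}$ in several denominators, and extracting the real parts while simplifying to the compact two-term polynomial in $\mu$ demands careful bookkeeping. I would lean on the fact that the same cancellations that produced the clean result in the theorem must recur here, differing only because the substitution now leaves $\mu$ free instead of pinning it; in particular the overall positive prefactor $C^2\Psi'(0)^2/[4(1+C\Psi'(0)^2)^4(4+3C\Psi'(0)^2)]$ will re-express as a positive function of $\mu$ and $\Psi'(0)$, leaving $\hat d$ as the sole sign-carrying factor. Once $\hat d$ is isolated, the Neimark--Sacker criteria from \cite{Kuznetsov, Sacker} immediately give bifurcation for $\hat d \neq 0$, supercritical for $\hat d < 0$, and subcritical for $\hat d > 0$, completing the proof.
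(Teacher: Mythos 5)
Your proposal follows the paper's own proof essentially step for step: the same transversality computation $\frac{d}{dC}(r(C)^2)|_{C=C_*} = \mu\Psi'(0)^2$, the same observation that the resonance argument transfers verbatim because the critical eigenvalues coincide with those at $\mu = \mu_*$, and the same reuse of the Lyapunov-coefficient machinery up to (\ref{Eq: SameArgument}) with the substitution $1 + C_*\Psi'(0)^2 = 1/\mu$, yielding a positive prefactor (the paper's is $(\mu-1)^2/[4\Psi'(0)^2(3+\mu)]$) times $\d$. The approach and all key steps match the paper's proof.
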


\begin{proof}

As in Section \ref{Sec: fp1}, without loss of generality we translate the fixed point to $(w_*,x_*)=(0,0)$.
We showed in Section \ref{Sec: fp1} that the pair of eigenvalues $\lambda$ are complex conjugates if 
\begin{equation}
C > \frac{1}{\Psi'(0)^2}\left[\frac{(1+\mu)^2}{4\mu} - 1\right],
\end{equation}
 and $|\lambda|=1$ when 
\begin{equation*}
C = C_* = \frac{1}{\Psi'(0)^2}\left(\frac{1}{\mu}-1\right).
\end{equation*}
This shows that a Neimark--Sacker bifurcation occurs at the fixed point if the map is generic.

Most of the arguments for genericity follow directly from the proof of Theorem \ref{Thm: thetheorem}.
For the first condition we use the same $r$ and study $r(C)$.  Just as with $r(\mu)$,
it suffices to study $r(C)^2$.  Then, 
\begin{equation}
\frac{d}{dC}\left(r(C)^2\right)\bigg|_{C=C_*} = \mu\Psi'(0)^2 \neq 0 \text{ if } \Psi'(0) \neq 0.
\end{equation}
Hence, if $\Psi'(0) \neq 0$, the map satisfies the transversality condition.

The argument to show no strong resonances is the same as that of Theorem \ref{Thm: thetheorem}.

For the first Lyapunov coefficient, the calculations are the same up to (\ref{Eq: SameArgument}),
by replacing $C$ with $C_*$ and $\mu_*$ with $\mu$.  Then, substituting in for $\lambda$ and $C_*$ gives 
\begin{equation}
d(0) = \frac{(\mu-1)^2}{4\Psi'(0)^2(3+\mu)}\left[\Psi'(x_*)\Psi'''(x_*)\left(6-\mu-\mu^2\right)
+ 2\Psi''(x_*)^2\left(32\mu-6\mu^3-21\right)\right].
\end{equation}
Since 
\begin{equation*}
\frac{(\mu-1)^2}{4\Psi'(0)^2(3+\mu)} > 0
\end{equation*}
we need only be concerned with, 
\begin{equation}
\d := \Psi'(x_*)\Psi'''(x_*)\left(6-\mu-\mu^2\right) + 2\Psi''(x_*)^2\left(32\mu-6\mu^3-21\right)
\end{equation}
Consequently, the Neimark--Sacker criteria imply that the bifurcation occurs if $\d \neq 0$,
and it is supercritical for $\d < 0$ and subcritical for $\d > 0$.

This shows that for certain properties of $\Psi$ the genericity conditions are satisfied,
thereby completing the proof.

\end{proof}

\section{Application of theoretical results to test functions}\label{Sec: Test}

In \cite{Gilet14}, Gilet uses the test functions, 
\begin{equation}
\Psi(x,\beta) = \frac{1}{\sqrt{\pi}}\cos\beta\sin3x + \frac{1}{\sqrt{\pi}}\sin\beta\sin5x
\end{equation}
where $\beta \in [0,\pi]$ is a fixed parameter that can be changed in order to tweak the shape of the eigenmode. 
He shows numerical constructions of the iterate-space for $\beta = \pi/3$ and $\beta = \pi/6$, and studies the 
statistics of the iterates for other values of $\beta$, all the while holding $C = 0.05$. We first follow in his 
footsteps and hold $C = 0.05$ and vary $\mu$, then we hold $\mu = 0.5$ and vary $C$.

To illustrate how small the parameter regime for a subcritical N--S bifurcation is, let us consider $\d$ for the 
origin, since it is always a fixed point.  We plot over $\beta \in [0,\pi]$, first for $\mu_*$ (i.e. holding
$C = 0.05$), then for $C_*$ (i.e. holding $\mu = 0.5$). This is shown in Fig. \ref{Fig: d}.  Recall that
$\d < 0$ denotes a supercritical N--S and $\d > 0$ denotes subcritical.

\begin{figure}[htbp]
\centering
\begin{subfigure}[t]{0.49\textwidth}
\includegraphics[width=\textwidth]{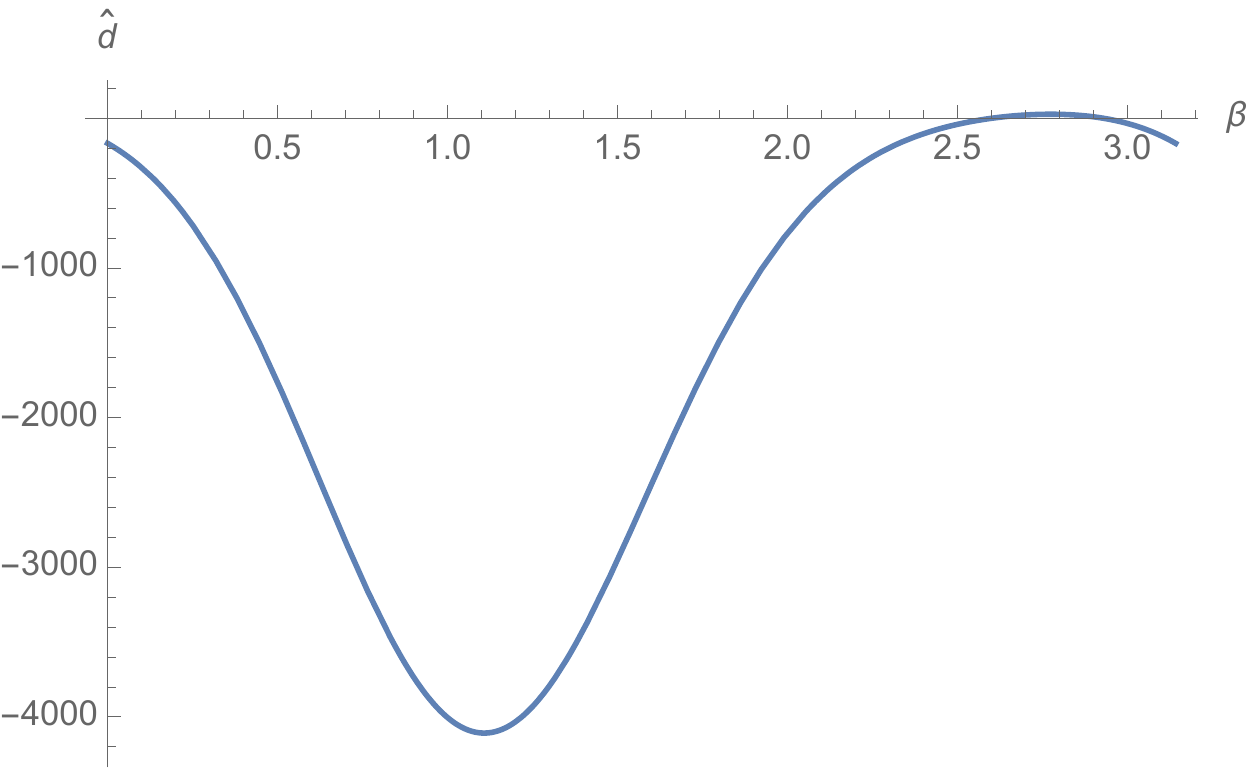}
\caption{$\mu = \mu_*$, $C = 0.05$, $(x_*,w_*) = (0,0)$}
\label{Fig: d_mu}
\end{subfigure}
\begin{subfigure}[t]{0.49\textwidth}
\includegraphics[width=\textwidth]{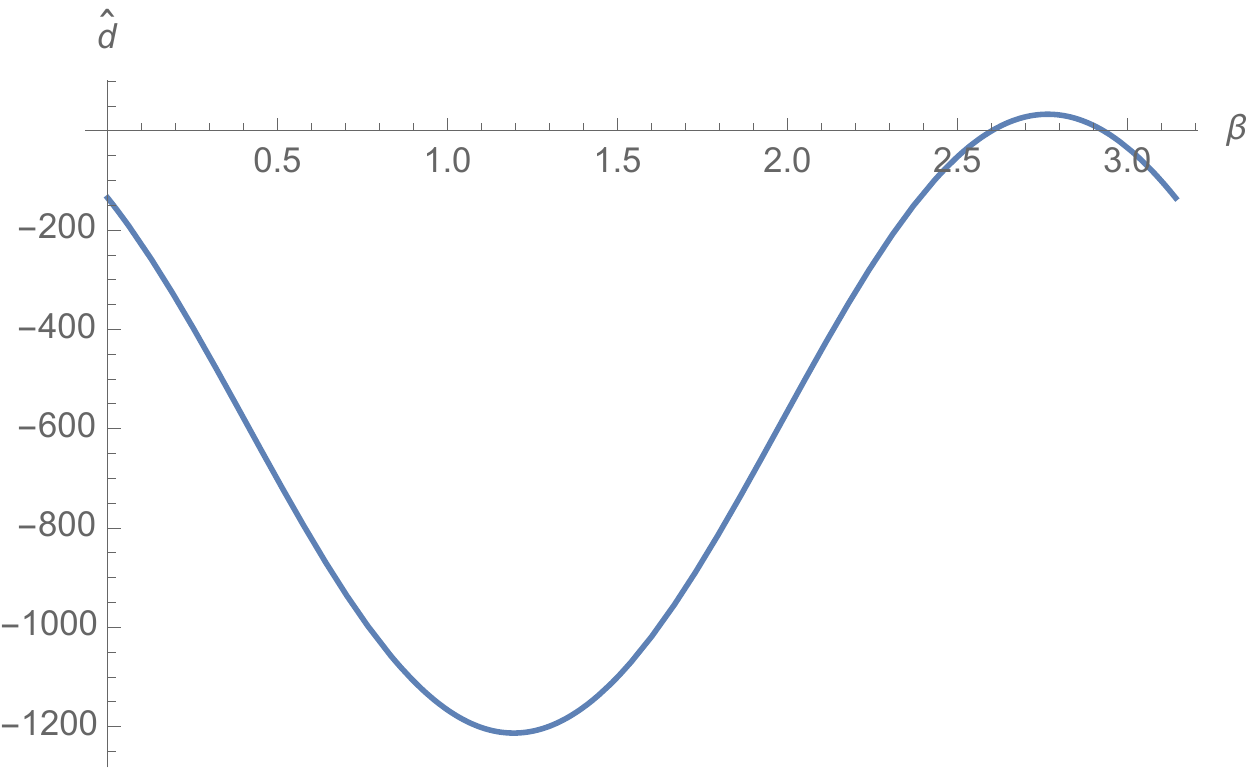}
\caption{$\mu = 0.5$, $C = C_*$, $(x_*,w_*) = (0,0)$}
\label{Fig: d_C}
\end{subfigure}
\caption{Plot of first Lyapunov coefficient for Neimark--Sacker bifurcations in $\mu$
and $C$ respectively.}
\label{Fig: d}
\end{figure}

\subsection{Simulation results for Neimark--Sacker bifurcation in $\mu$}\label{Sec: Bif_mu}

As in \cite{Gilet14}, we first study the eigenmode for $\beta = \pi/3$, for which the map exhibits supercritical
N--S bifurcations at various fixed points, but we also study $\beta = 5\pi/6$, for which the map exhibits a 
subcritical N--S bifurcation at the origin.  We hold $C = 0.05$, which corresponds to a droplet much smaller than 
the cavity being created by an impact, and we vary $\mu$, which corresponds to the system approaching high 
memory.

It should be noted that for the sake of plotting, we consider the map (\ref{Eq: themap}) to be written as 
\begin{equation}
\hspace{-0.1cm}\left[\begin{array}{c}
x_{n+1}\\ w_{n+1}
\end{array}\right] = F(x_n,w_n) = \left[\begin{array}{c}
x_n - Cw_n\Psi'(x_n)\\
\mu[w_n + \Psi(x_n)]
\end{array}\right]
\label{Eq: theothermap}
\end{equation}

\subsubsection{Supercritical Neimark--Sacker bifurcations for $\beta = \pi/3$}

For $\beta = \pi/3$, we now consider the map (\ref{Eq: theothermap}) on the domain $x\in[0,\pi/2]$.  On this 
domain, the map has three fixed points $(x_*,0)$ about which N--S bifurcations may occur; one of which is the 
origin with the other two being the following:

\begin{align*}
x_* &= \tan^{-1}\left(\sqrt{\frac{15+\sqrt{3}-\sqrt{6(8-\sqrt{3})}}{9-\sqrt{3}+\sqrt{6(8-\sqrt{3})}}}\right)
\approx 0.7269\\
x_* &= \tan^{-1}\left(\sqrt{\frac{15+\sqrt{3}+\sqrt{6(8-\sqrt{3})}}{9-\sqrt{3}-\sqrt{6(8-\sqrt{3})}}}\right)
\approx 1.3515
\end{align*}

\begin{figure}[htbp]
\centering
\begin{subfigure}[t]{0.49\textwidth}
\includegraphics[width=\textwidth]{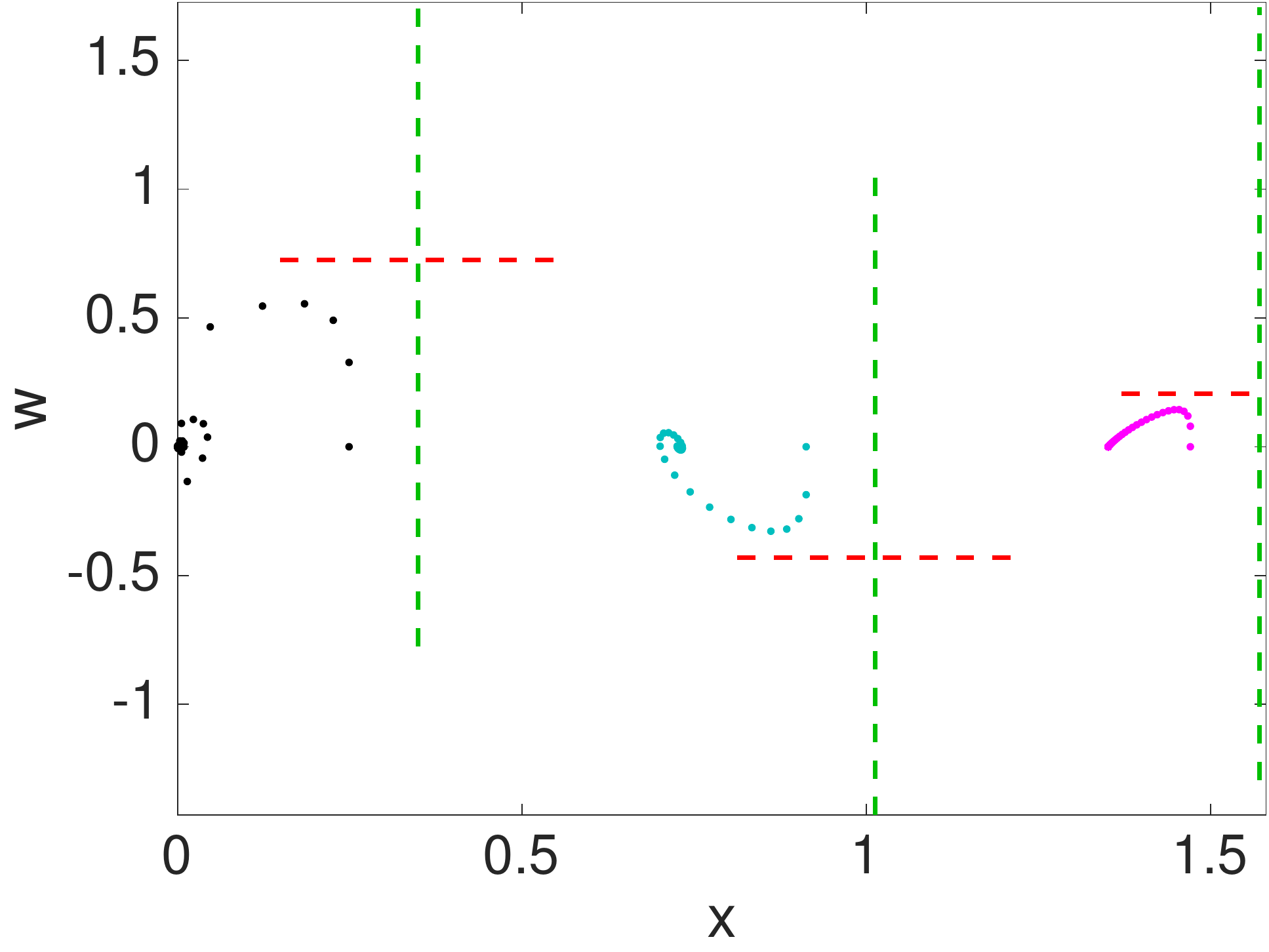}
\caption{Stable foci: $\mu = .5$}
\label{Fig: mu5}
\end{subfigure}
\hfill
\begin{subfigure}[t]{0.49\textwidth}
\includegraphics[width=\textwidth]{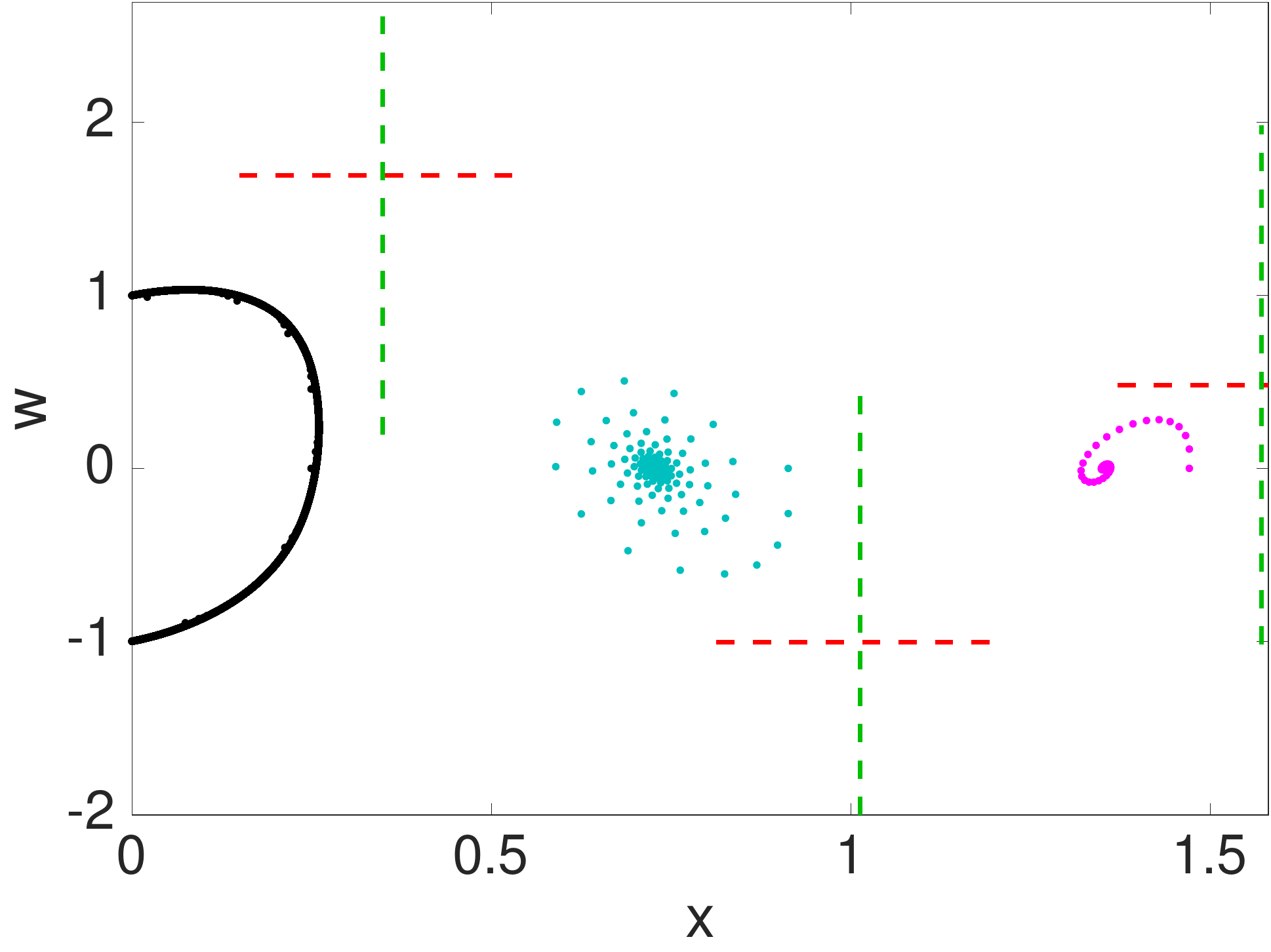}
\caption{Invariant circle about \textbf{origin}: $\mu = .7$}
\label{Fig: mu7}
\end{subfigure}
\begin{subfigure}[t]{0.49\textwidth}
\includegraphics[width=\textwidth]{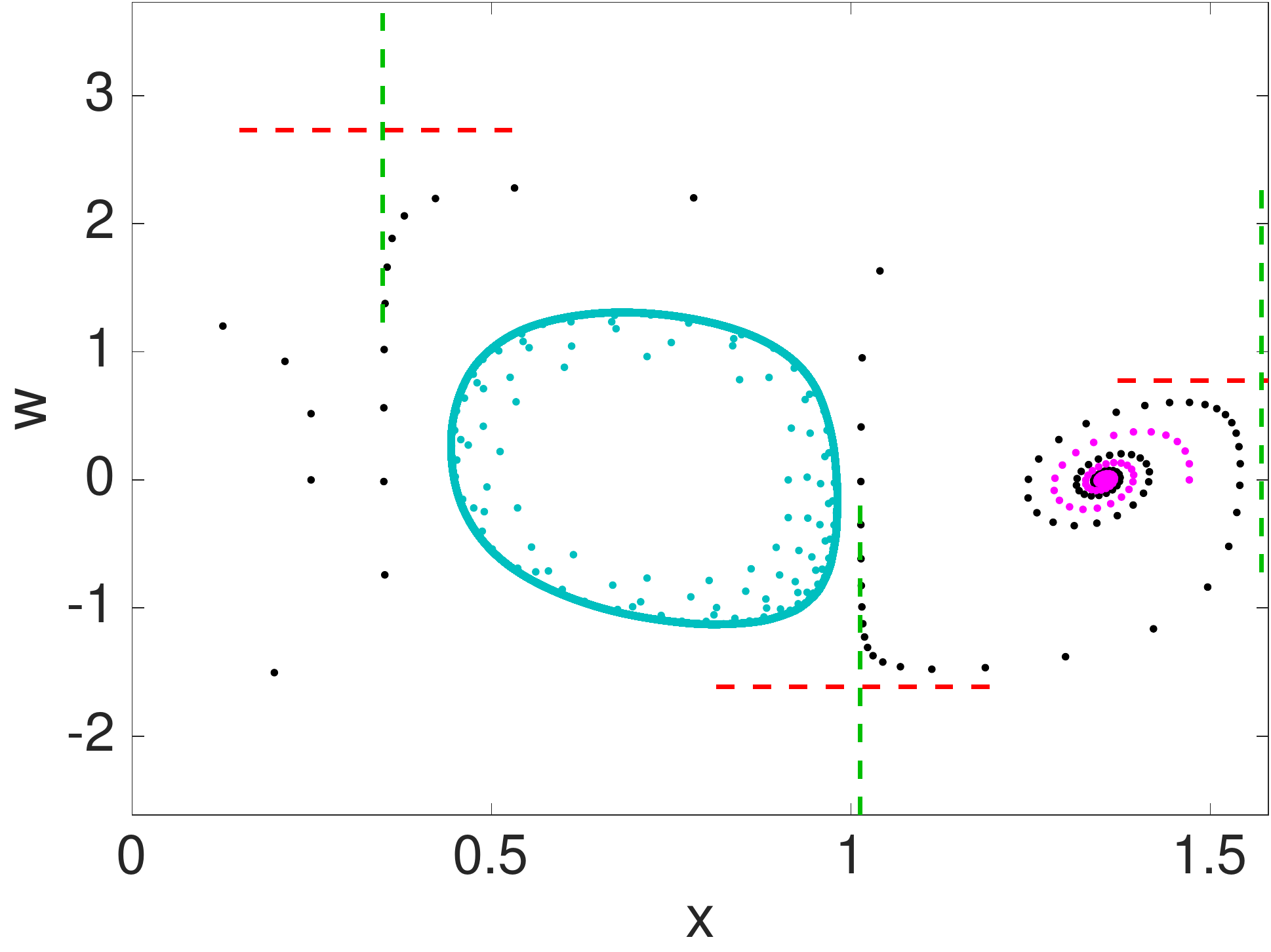}
\caption{Invariant circle about {\color{cyan}$(\approx .7269,0)$}: $\mu = .79$}
\label{Fig: mu79}
\end{subfigure}
\hfill
\begin{subfigure}[t]{0.49\textwidth}
\includegraphics[width=\textwidth]{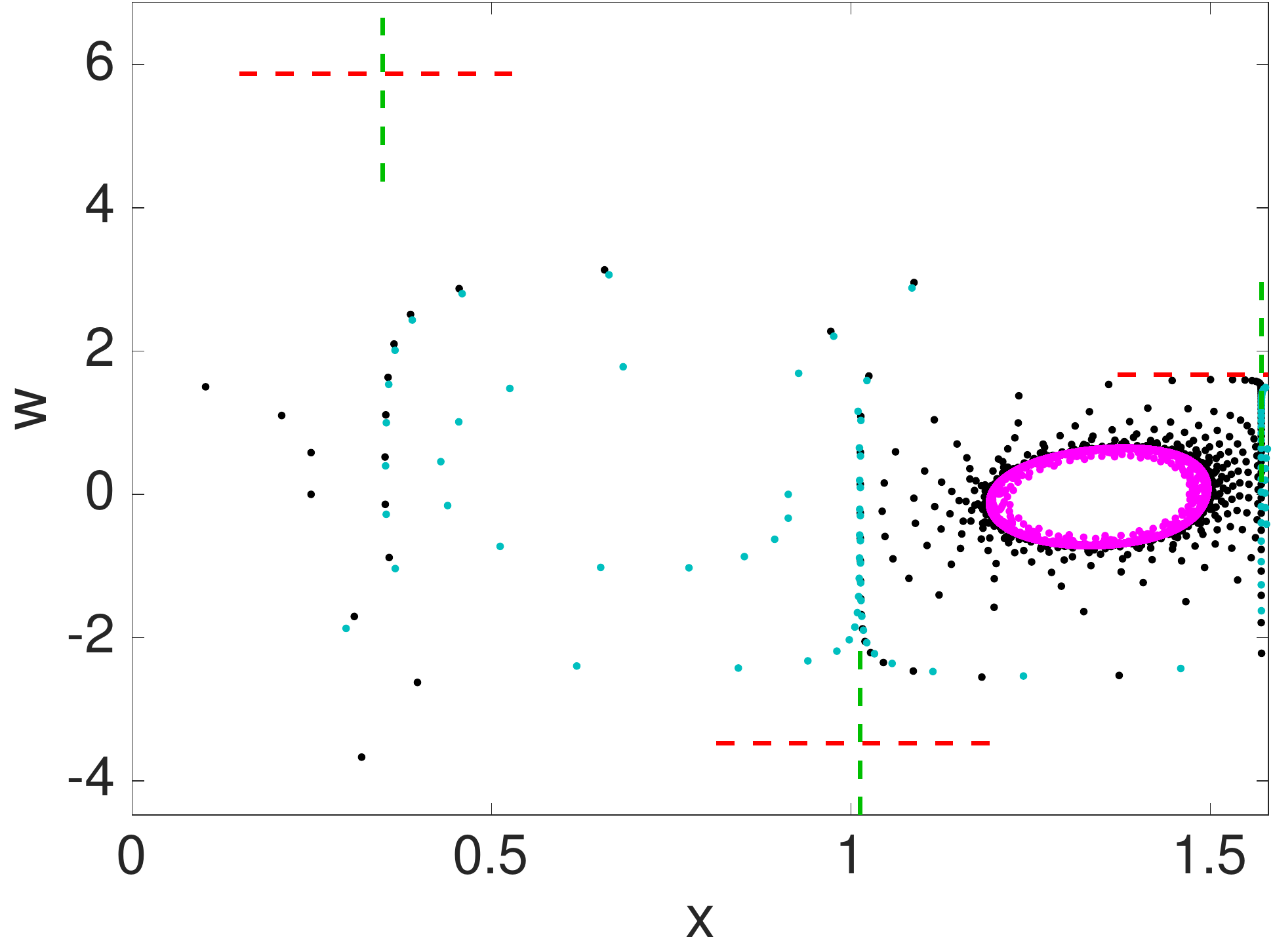}
\caption{Invariant circle about {\color{magenta}$(\approx 1.3515,0)$}: $\mu = .89$}
\label{Fig: mu89}
\end{subfigure}
\caption{For each plot the {\color{green}green} lines represent the linear
{\color{green}stable manifold} and the {\color{red}red} lines represent the linear {\color{red}unstable manifold} at 
the respective saddle fixed points.  The \textbf{black} markers represent iterates originating from a neighborhood 
of the \textbf{origin}, the {\color{cyan}cyan} markers represent iterates originating from a neighborhood of 
{\color{cyan}$(\approx .7269,0)$}, and the {\color{magenta}magenta} markers represent the iterates originating 
from a neighborhood of  {\color{magenta}$(\approx 1.3515,0)$}.}
\label{Fig: Supercritical}
\end{figure}

At the fixed points a supercritical N--S bifurcation occurs for $\mu_* = 0.64894$, $0.742027$, $0.879451$, 
respectively, for which $\d = -4076.61$, $-1747.52$, $-410.779$.  We also have three other fixed points:
$(\approx 0.3490,\approx\frac{\mu}{1-\mu}\Psi(0.3490))$, $(\approx 1.0128,\approx\frac{\mu}{1-\mu}\Psi(1.0128))$, $(\pi/2,\frac{\mu}{1-\mu}\Psi(\pi/2))$, which are saddles (the $w$ position varying with
$\mu$) and therefore not subject to N--S bifurcations.  Let us vary the parameter $\mu$ from $\mu = 0.5$ to
$\mu = 0.89$. We illustrate the progression of the bifurcations in Fig. \ref{Fig: Supercritical}. In Fig.
\ref{Fig: mu5}, the relevant fixed points are all stable foci.  Next, in Fig. \ref{Fig: mu7}, we have passed the
critical value for the origin and a stable invariant circle is now visible. Finally, in Fig. \ref{Fig: mu79} and
\ref{Fig: mu89}, the critical values for the next two fixed points, respectively, are passed.  We notice in Fig. 
\ref{Fig: mu79} and \ref{Fig: mu89} that the preceding fixed point(s), respectively, are now unstable focus(foci).  
This shows that each fixed point undergoes a supercritical N--S bifurcation in accordance with Theorem
\ref{Thm: thetheorem}.  It should be noted that since the test functions are periodic, on the entire domain of
$x \in \R$, N--S bifurcations occur at (countably) infinitely many fixed points
with the final bifurcation occurring for fixed points corresponding to $x_* \approx 1.3515$.

Physically, the droplet about any fixed point is initially (Fig. \ref{Fig: mu5}) oscillating with decaying radius. Then, 
due to the nature of the single eigenmode used by Gilet, the droplet corresponding to a position near the origin 
will undergo a N--S bifurcation (Fig. \ref{Fig: mu7}). This corresponds to the droplet oscillating about the origin 
with a fixed radius.  Then as $\mu$ is varied further, the invariant circle about the origin disintegrates and it 
becomes unstable, which corresponds to the orbit of the droplet increasing radially until it reaches the 
neighborhood of one of the neighboring fixed points.  Now, since a N--S bifurcation has already occurred
at the next fixed point (Fig. \ref{Fig: mu79}), the iterates will bypass this region and converge to the
following fixed point, which corresponds to the droplet jumping 
to an orbit around a different center. Similar behavior occurs at every other N--S fixed point.

In addition to N--S bifurcations, we notice a curious phenomenon.  As the saddle fixed points move away from the  
N--S fixed points the trajectories from the neighborhood of one  N--S fixed point crosses into that of another.  
Furthermore, as the invariant circle increases in radius it seems to collide (more precisely, gets arbitrarily close) 
with a stable manifold of one of its neighboring saddles just before the onset of chaos.
We shall analyze this further in Section \ref{Sec: Homoclinic}.

\subsubsection{Subcritical Neimark--Sacker bifurcations for $\beta = 5\pi/6$}

For $\beta = 5\pi/6$ let us once again consider the map (\ref{Eq: theothermap}), however let us simplify matters 
and restrict our domain to a neighborhood of the origin.  Here lies a subcritical N--S bifurcation at
$\mu_* = 0.999847$ where $\d = 4.88756$. Since this is subcritical, let us vary our parameter $\mu$ 
backwards from $\mu = 0.9999$ to $\mu = .999$.  This is illustrated in Fig. \ref{Fig: Subcritical}.  First the origin 
is an unstable focus, then as $\mu$ passes $\mu_*$ backwards the origin becomes a stable focus, however if an 
initial point is taken further out the iterates diverge, which indicates an unstable invariant circle.  In Fig.
\ref{Fig: beta2mu999}, we represent this unstable invariant circle by the black dashed curve.

\begin{figure}[htbp]
\centering
\begin{subfigure}[t]{0.48\textwidth}
\includegraphics[width=\textwidth]{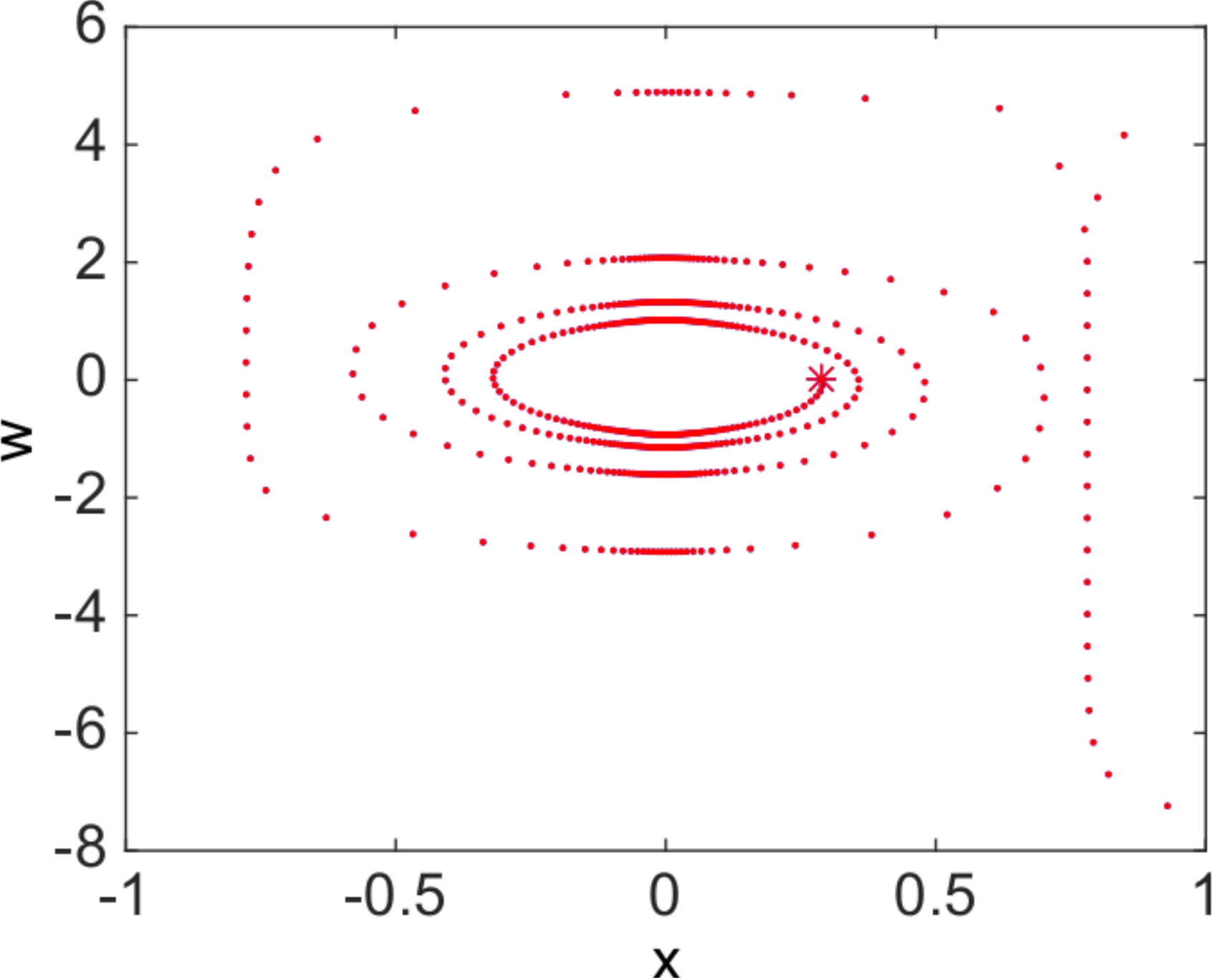}
\caption{Unstable focus: $\mu = 0.9999$}
\label{Fig: beta2mu9999}
\end{subfigure}
\begin{subfigure}[t]{0.495\textwidth}
\includegraphics[width=\textwidth]{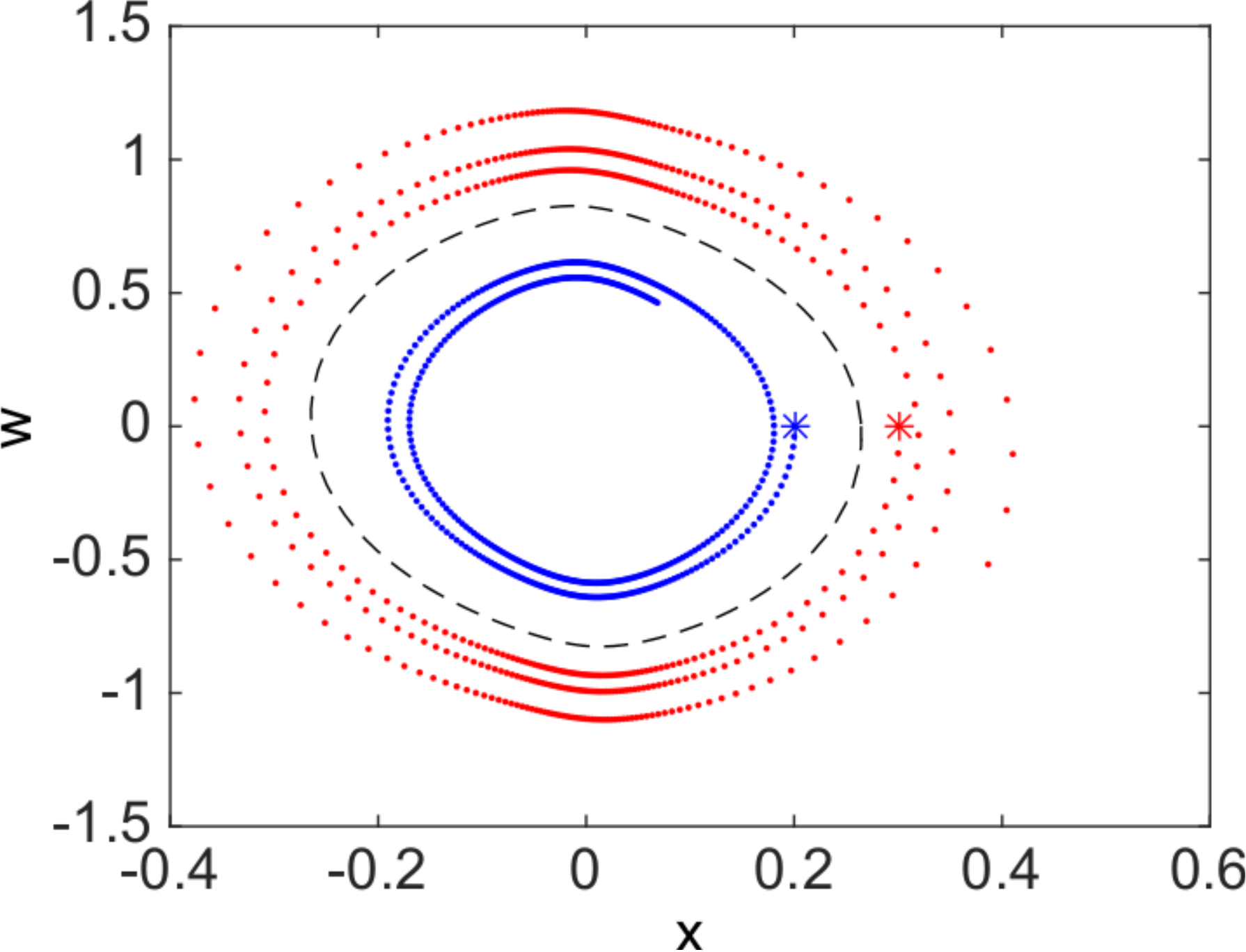}
\caption{Unstable invariant circle: $\mu = 0.999$}
\label{Fig: beta2mu999}
\end{subfigure}
\caption{A star denotes the initial point.  In the second plot the {\color{blue}blue}
iterates have an initial point {\color{blue}inside} the invariant circle, and the {\color{red}red} iterates have an 
initial point {\color{red}outside} the invariant circle.  Finally, the black dashed curve represents the unstable 
invariant circle.}
\label{Fig: Subcritical}
\end{figure}

While from a dynamical systems perspective, we vary the parameter backward for a subcritical bifurcation, this 
would not make sense for physical intuition.  Rather let us begin at Fig. \ref{Fig: beta2mu999}, where a droplet 
close enough to the origin would oscillate with a decaying radius, however if the droplet were exactly on the 
unstable invariant circle it would remain at that constant radius, and if the droplet were slightly further out it 
would oscillate with a growing radius and escape the neighborhood of the origin.  Now as we increase the memory 
the invariant circle shrinks until the critical parameter value $\mu_* = 0.999847$, when it ceases to exist and the 
origin becomes unstable, which corresponds to the droplet anywhere near the origin to oscillate with increasing 
radius.  It should be noted that the growth of the radius is not indefinite,
and at some point it may be drawn to another fixed 
point or invariant circle, which is usually the case.

\subsection{Simulation results for Neimark--Sacker bifurcation in $C$}

Varying $C$ produces quite exotic trajectories, one example of which we present in this section, holding
$\mu = 0.5$ and $\beta = \pi/3$.  Consider two fixed points, $(x_*,0)$ such that 
\begin{align*}
x_* &= -\tan^{-1}\left(\sqrt{\frac{15+\sqrt{3}+\sqrt{6(8-\sqrt{3})}}{9-\sqrt{3}-\sqrt{6(8-\sqrt{3})}}}\right)
\approx -1.3515,\\
x_* &= -\pi + \tan^{-1}\left(\sqrt{\frac{15+\sqrt{3}-\sqrt{6(8-\sqrt{3})}}{9-\sqrt{3}
+\sqrt{6(8-\sqrt{3})}}}\right)
\approx -1.7901.
\end{align*}
At these fixed points a supercritical Neimark--Sacker bifurcations occurs simultaneously due to the symmetry for 
$C_* = 0.36477$, where $\d = -578.487$.  The other fixed point is $(-\pi/2,\Psi(-\pi/2))$, which is a saddle.

For this case, varying $C$ forwards corresponds to the drop size increasing, which leads to greater inertia.  With 
increased inertia, the distance the droplet travels due to an impact is greater, and hence sensitive dependence 
may come into play. First, the N--S bifurcation is illustrated in Fig. \ref{Fig: NS_C}. Here the physical interpretation 
of the bifurcation is the same as in section \ref{Sec: Bif_mu}, the only difference being the change in the size of 
the droplet as opposed to changes in memory.

\begin{figure}[htbp]
\centering
\begin{subfigure}[t]{0.49\textwidth}
\includegraphics[width=\textwidth]{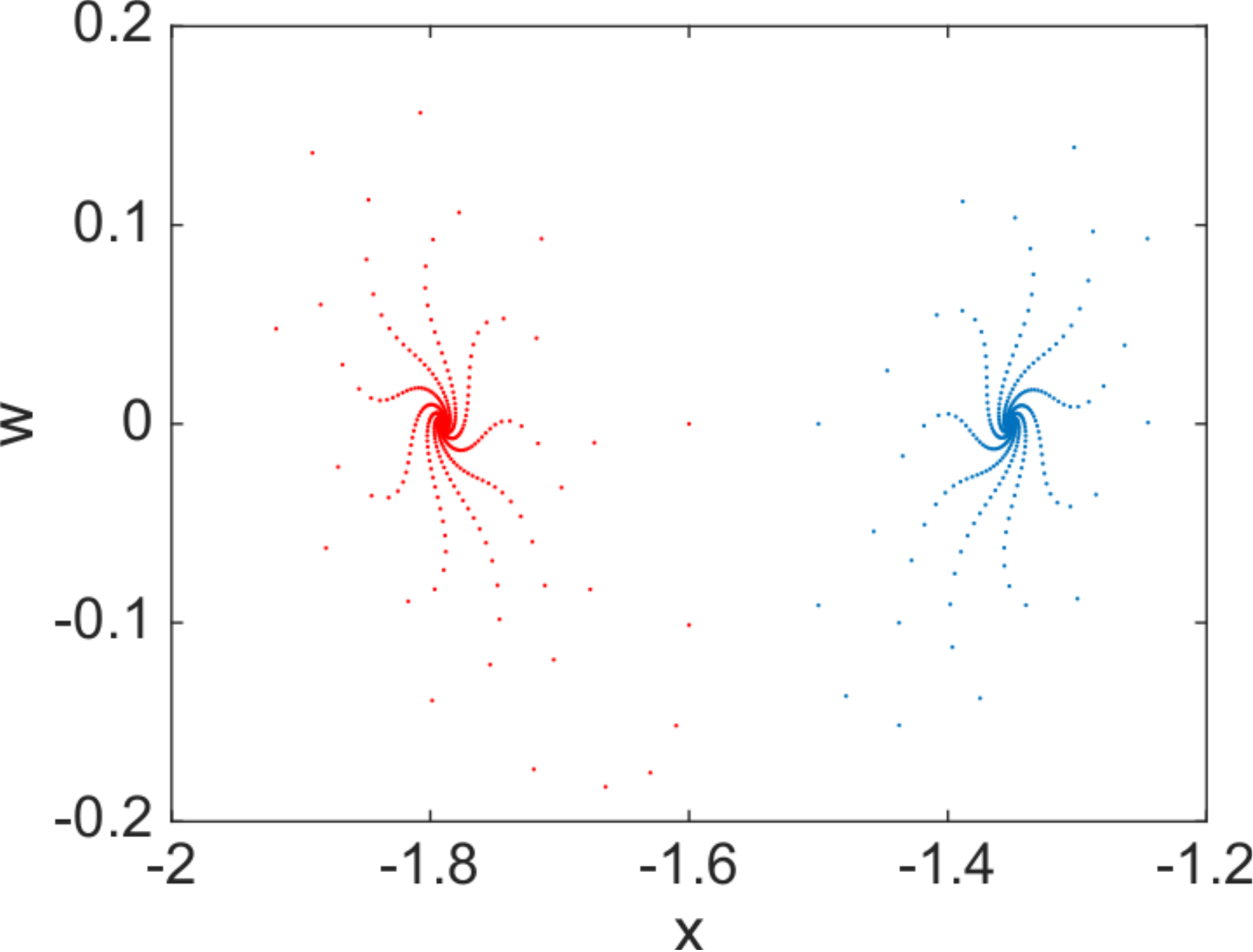}
\caption{Stable foci: $C = 0.35$}
\label{Fig: C35}
\end{subfigure}
\begin{subfigure}[t]{0.49\textwidth}
\includegraphics[width=\textwidth]{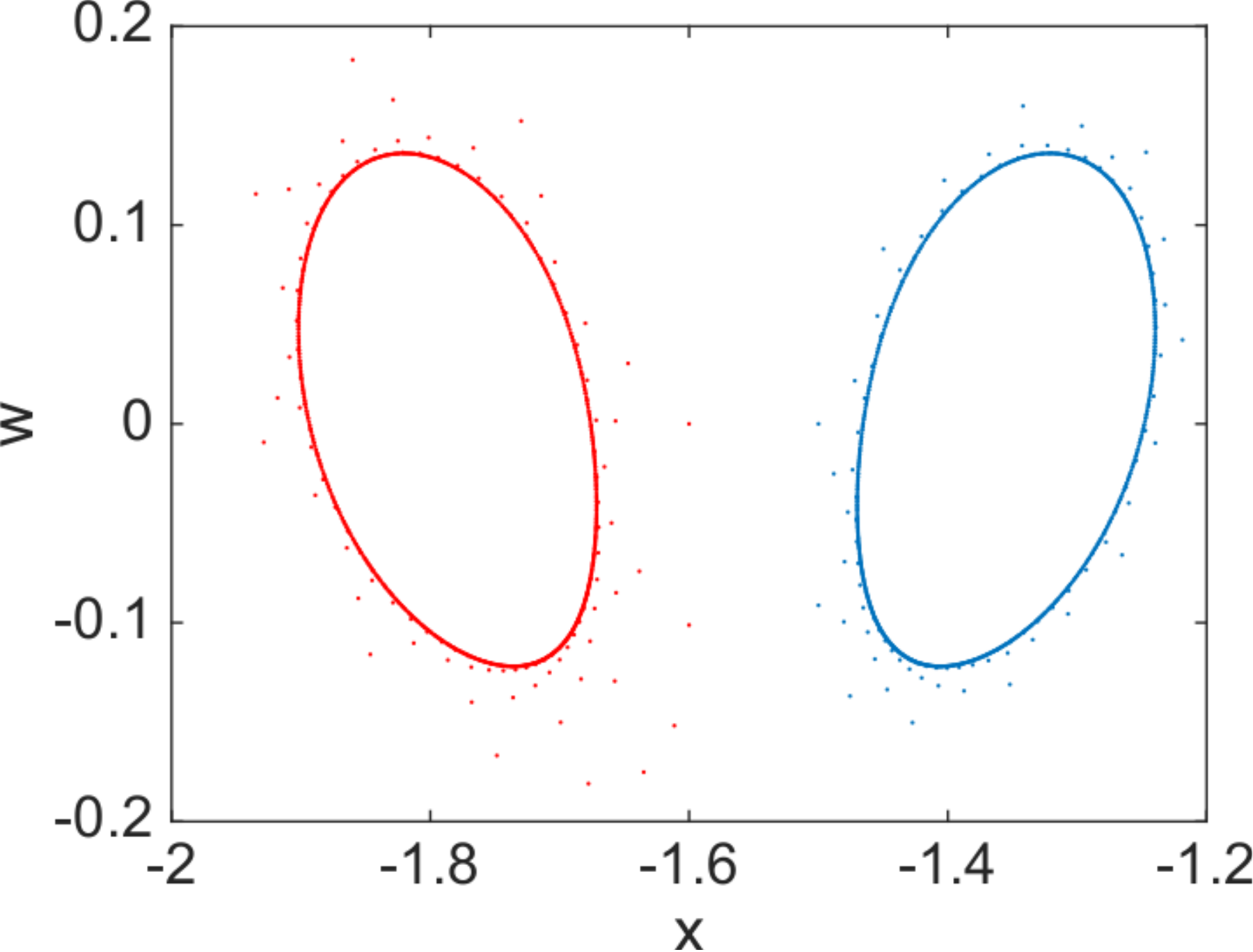}
\caption{Stable invariant circle: $C=0.4$}
\label{Fig: C4}
\end{subfigure}
\caption{The {\color{red} red} trajectories start on the left of the saddle
{\color{blue} blue} trajectories start on the right of the saddle.}
\label{Fig: NS_C}
\end{figure}

Now, as we increase $C$, we obtain the exotic trajectories as illustrated in Fig. \ref{Fig: Exotic}.  We observe the 
characteristic stretching and folding of a chaotic system, which may lead to fractal structures. The way in which 
the invariant circles approach the stable manifold that separates them shows evidence of what appears to be an 
unusual and possibly new type of global bifurcation, which is briefly discussed further in the next section.

\begin{figure}[htbp]
\centering
\begin{subfigure}[t]{0.49\textwidth}
\includegraphics[width=\textwidth]{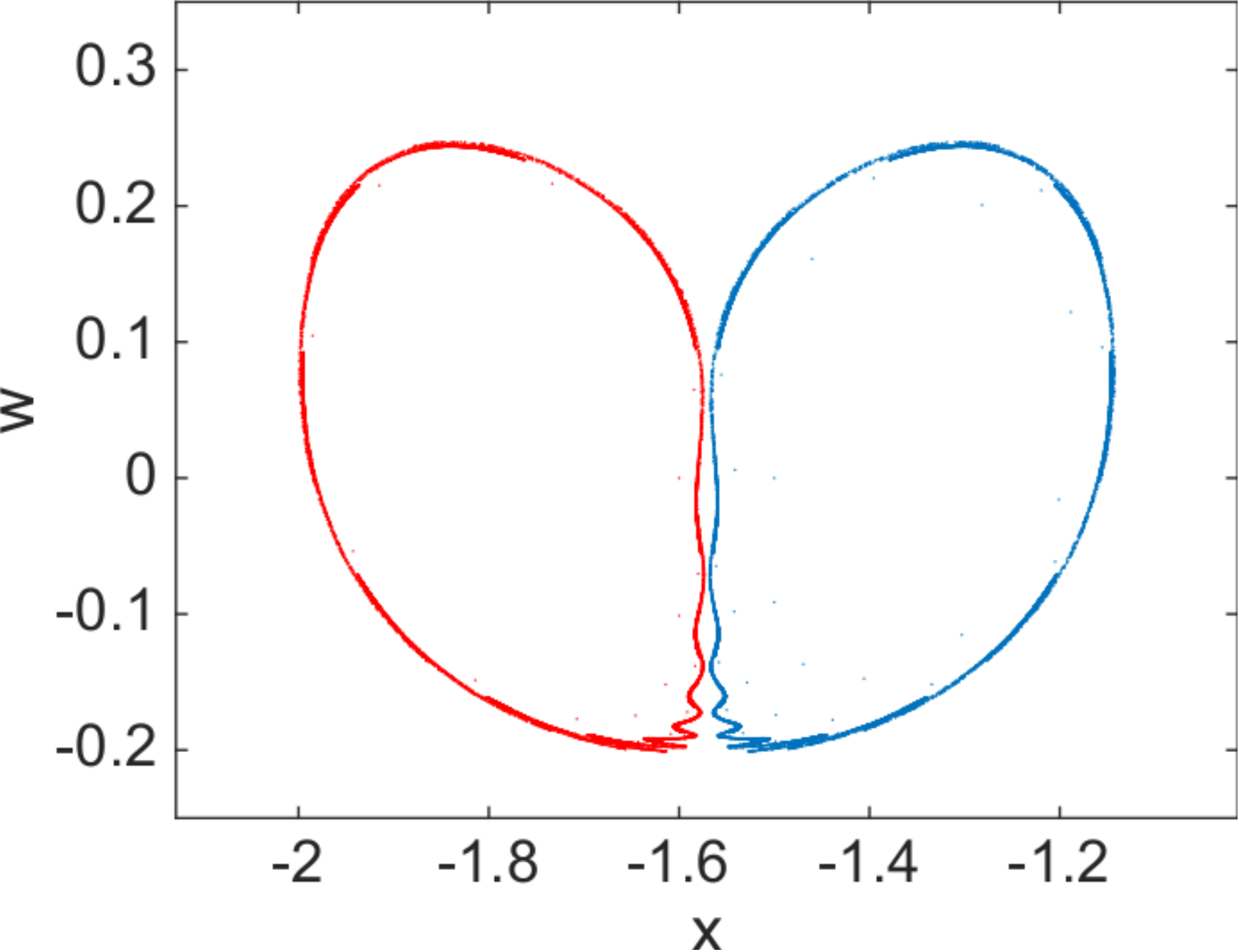}
\caption{Deformed invariant circles: $C = 0.5$}
\label{Fig: C5}
\end{subfigure}
\begin{subfigure}[t]{0.49\textwidth}
\includegraphics[width=\textwidth]{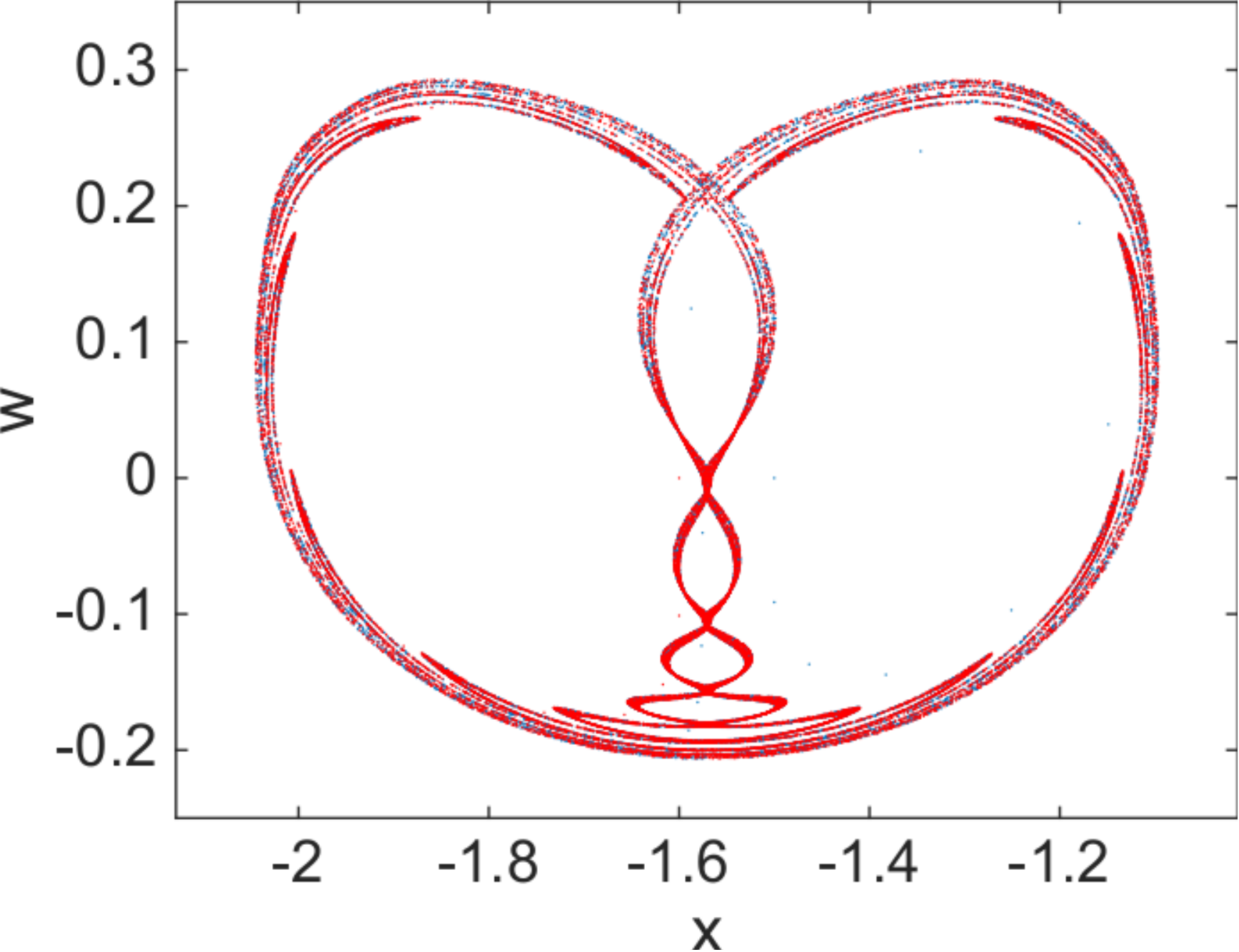}
\caption{Evidence of global bifurcation: $C=0.6$}
\label{Fig: C6}
\end{subfigure}
\caption{The {\color{red} red} trajectories start on the left of the saddle
{\color{blue} blue} trajectories start on the right of the saddle.}
\label{Fig: Exotic}
\end{figure}

\section{Evidence of global bifurcations leading to chaos}\label{Sec: Homoclinic}

In Fig. \ref{Fig: mu89} we observe the appearance of quite exotic dynamics.  If we continue to vary $\mu$ we find 
more exotic, chaotic-like, dynamics.  We illustrate this in Fig. \ref{Fig: Chaos} where we see a scatter of iterates in 
a manner indicative of chaos. Moreover, we observe as $\mu$ is varied, the iterates remain within a compact set.  
Here the iterates seem to satisfy the transitivity and sensitivity conditions for chaos. So far, all we have are 
simulations and informed intuition: further investigation is clearly required in order to obtain a more precise 
characterization these dynamical properties.

\begin{figure}[htbp]
\centering
\begin{subfigure}[t]{0.48\textwidth}
\includegraphics[width=\textwidth]{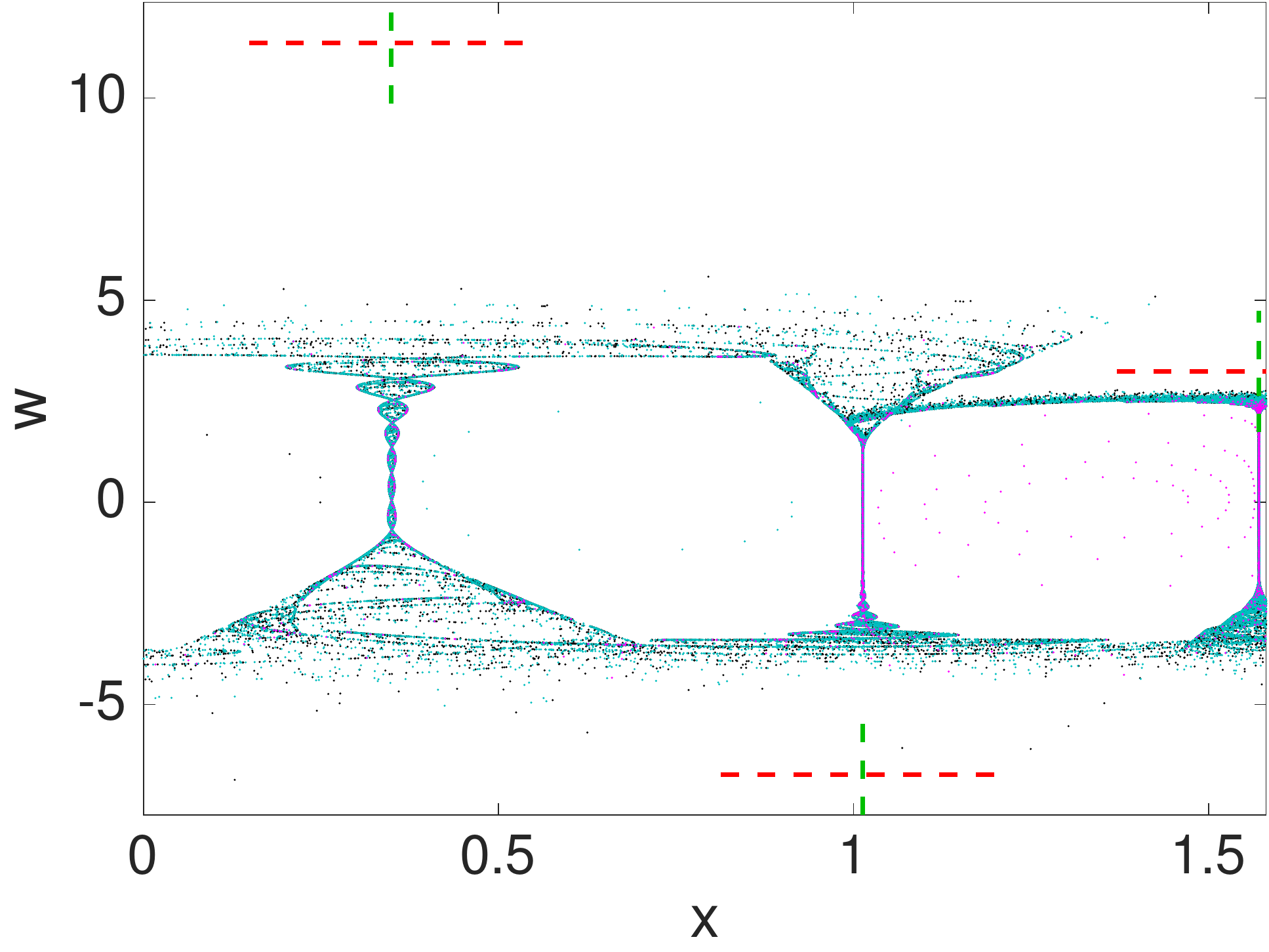}
\caption{Evidence of chaos: $\mu = 0.94$}
\label{Fig: mu94}
\end{subfigure}
\begin{subfigure}[t]{0.49\textwidth}
\includegraphics[width=\textwidth]{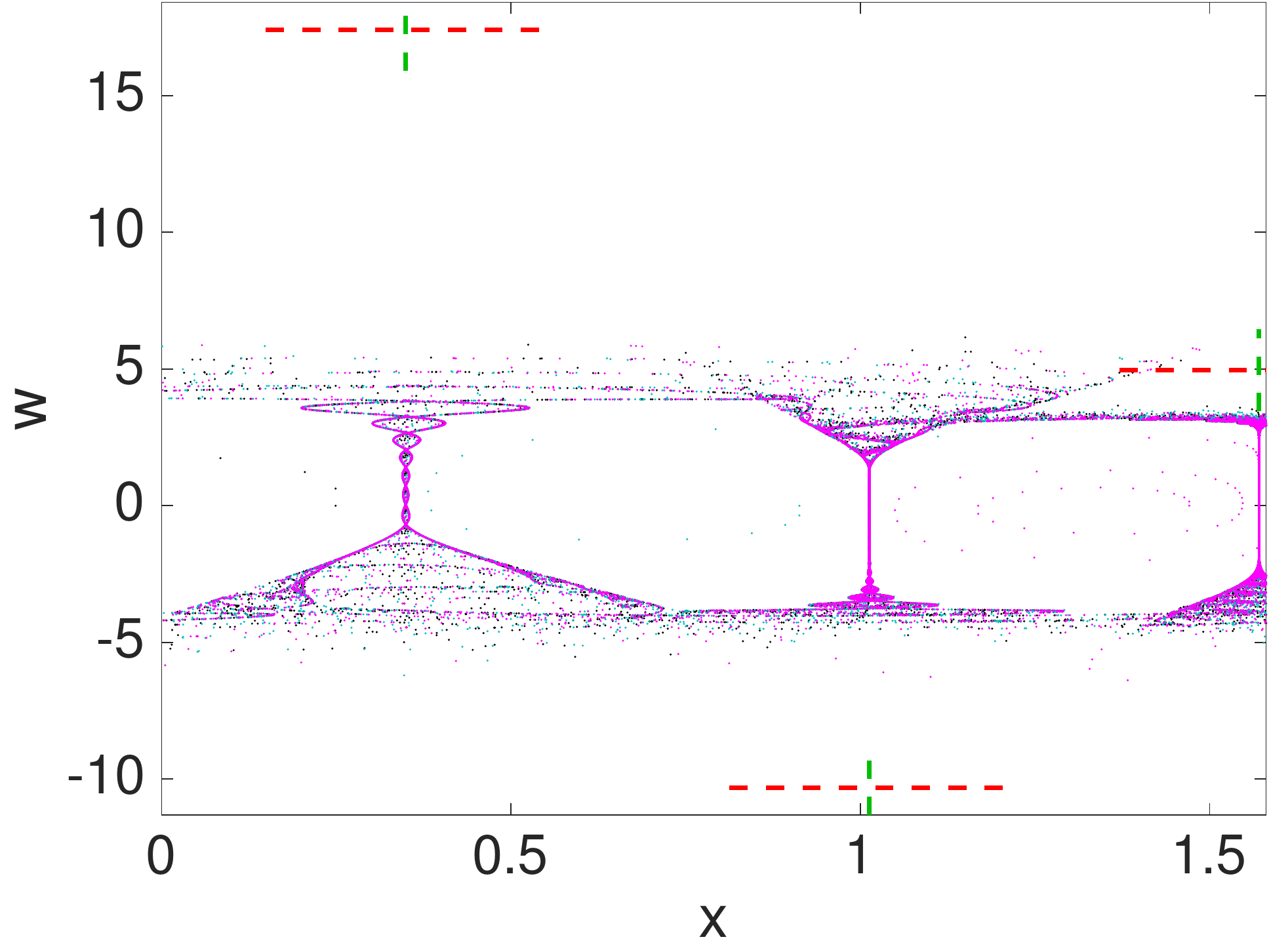}
\caption{Evidence of chaos: $\mu = 0.96$}
\label{Fig: mu96}
\end{subfigure}
\caption{Exotic, chaotic-like, trajectories.  For each plot the {\color{green}green}
lines represent the linear {\color{green}stable manifold} and the {\color{red}red} lines represent the linear 
{\color{red}unstable manifold} at the respective saddle fixed points.  The \textbf{black} markers represent 
iterates originating from a neighborhood of the \textbf{origin}, the {\color{cyan}cyan} markers represent iterates 
originating from a neighborhood of {\color{cyan}$(\approx .7269,0)$}, and the {\color{magenta}magenta} 
markers represent the iterates originating from a neighborhood of  {\color{magenta}$(\approx 1.3515,0)$}.}
\label{Fig: Chaos}
\end{figure}

Now let us vary $\mu$ near the onset of chaos to study the bifurcation. This is illustrated in Fig.
\ref{Fig: Homoclinic}.  First (Fig. \ref{Fig: mu913}) we observe our invariant circles for $\mu = 0.913$ as before, 
then as we increase to $\mu = 0.914$ the invariant circles start to disintegrate (Fig. \ref{Fig: mu914}).  Finally, 
in Fig. \ref{Fig: mu915}, for $\mu = 0.915$ we see the iterates bouncing between the neighborhoods of their 
respective N--S fixed point and perhaps nested invariant circles (both stable and unstable).  This provides evidence 
of an exotic new bifurcation as the invariant circle approaches the stable manifold of the saddle and perhaps even 
a cascading N--S doubling bifurcations, which is an infinite succession of stability changes of invariant circles that 
create new pairs of invariant circles (see, e.g. \cite{BRS2009}, where they are called cascading Hopf doubling 
bifurcations).

\begin{figure}[htbp]
\centering
\begin{subfigure}[t]{0.32\textwidth}
\includegraphics[width=\textwidth]{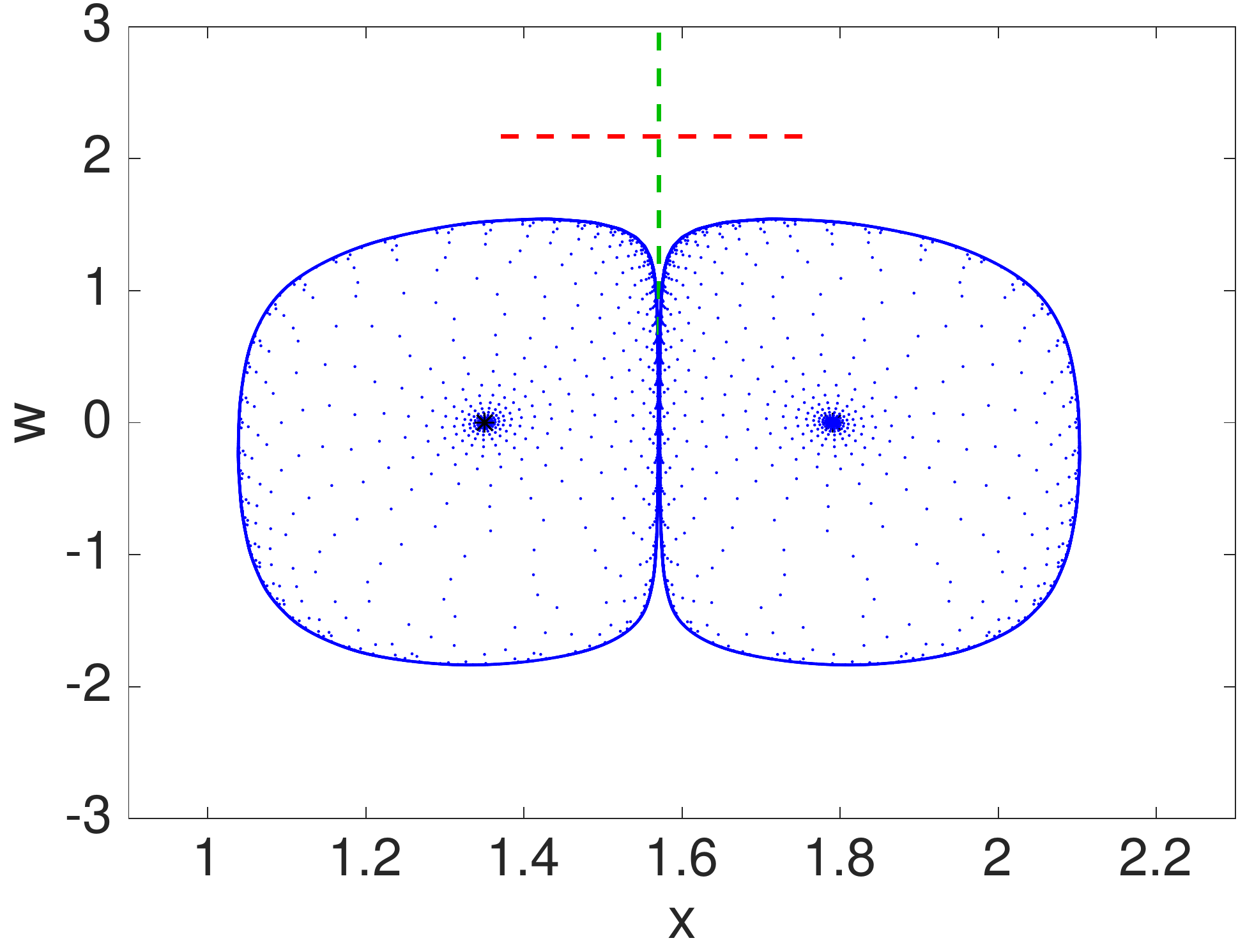}
\caption{Invariant circle: $\mu = 0.913$}
\label{Fig: mu913}
\end{subfigure}
\begin{subfigure}[t]{0.32\textwidth}
\includegraphics[width=\textwidth]{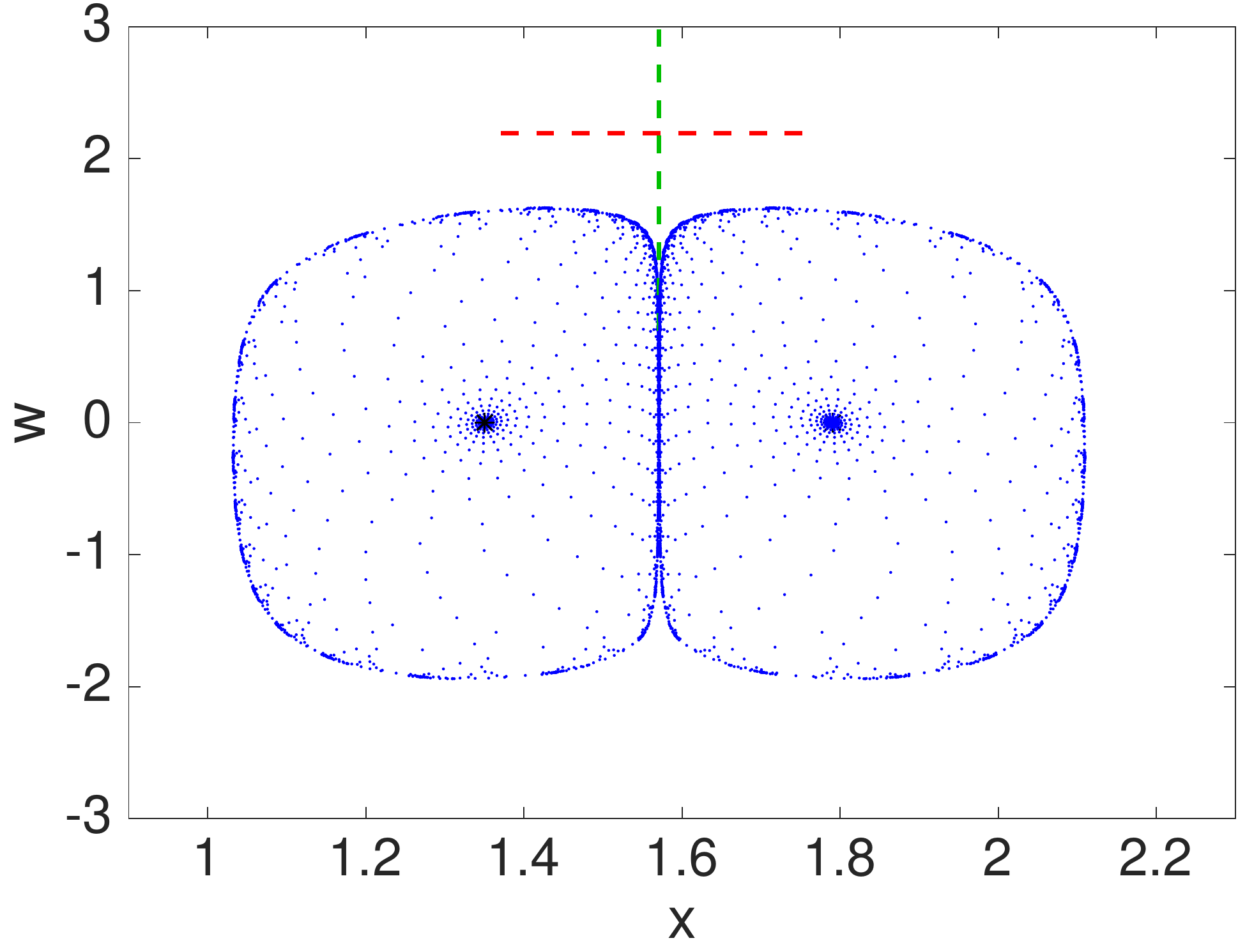}
\caption{Invariant circle disintegrating: $\mu = 0.914$}
\label{Fig: mu914}
\end{subfigure}
\begin{subfigure}[t]{0.32\textwidth}
\includegraphics[width=\textwidth]{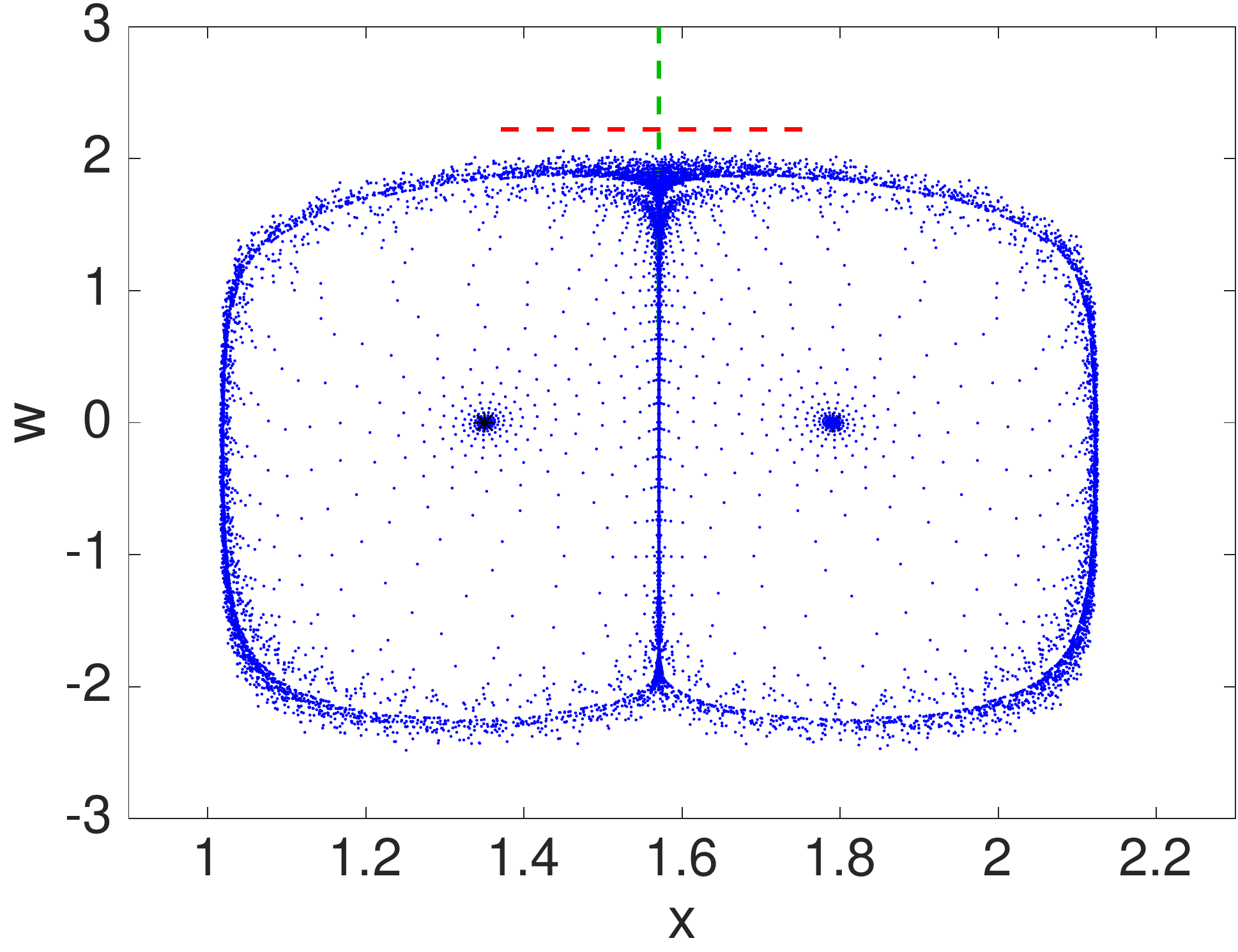}
\caption{Iterates after collision: $\mu = 0.915$}
\label{Fig: mu915}
\end{subfigure}
\caption{The {\color{green}green} lines represent the linear
{\color{green}stable manifold} and the {\color{red}red} lines represent the linear {\color{red}unstable manifold} at 
the respective saddle fixed points.  The {\color{blue}blue} markers are the {\color{blue}iterates}.}
\label{Fig: Homoclinic}
\end{figure}

Furthermore, we notice in our figures that iterates prefer to intersect the linear stable manifolds. If we do not 
restrict the region of the plot, we observe that the iterates seem to prefer certain positions as shown in Fig. 
\ref{Fig: Quantization}. In \cite{Gilet14}, Gilet gives a detailed study of the statistics associated with the positions 
of the iterates and in the simulations he observes a probability density function inversely proportional to
$|\Psi'(x)|$.  While we do not 
study the statistics, we observe that the linear stable manifolds pass through the roots of $\Psi'(x)$, which hints 
at connections between the dynamical systems aspects and statistical analysis of this model.

\begin{figure}[htbp]
\centering
\includegraphics[width = 0.75\textwidth]{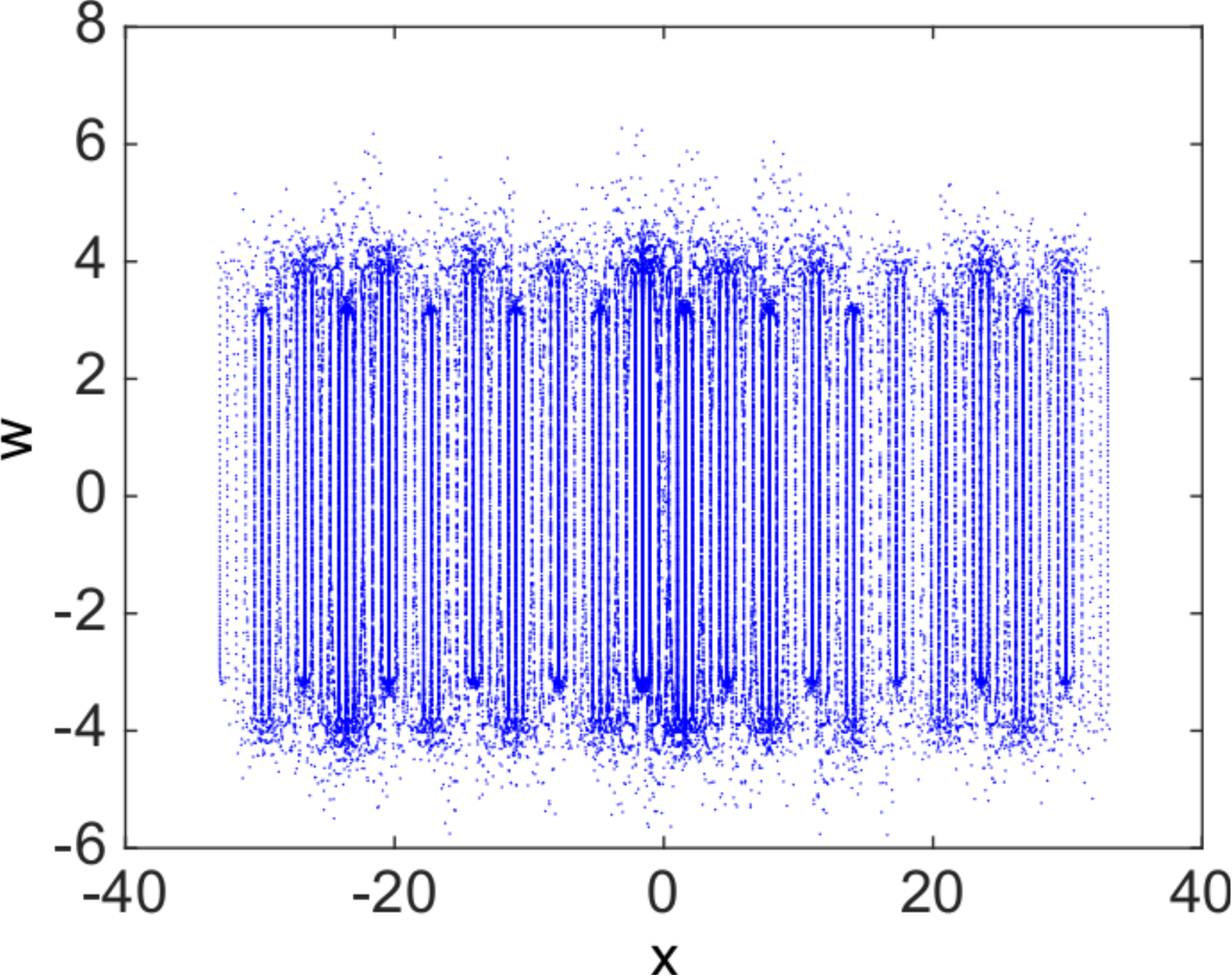}
\caption{Iterates of the map (\ref{Eq: theothermap}) generated from four initial points chosen near
the origin.}
\label{Fig: Quantization}
\end{figure}

\section{Conclusions}

Previous investigations of the walking phenomena have shown evidence of exotic dynamics, such as Hopf 
bifurcations and chaos, mainly through experiments and numerical simulations. However, the equations have 
generally been too complex to prove the existence of certain bifurcations, chaos, and other topological properties.  
In recent years, discrete models have been studied by Fort \ea\cite{FEBMC10}, Eddi \ea\cite{ESMFRC11}, 
Shirokoff \cite{Shirokoff13}, and Gilet \cite{Gilet14}. The models of Shirokoff \cite{Shirokoff13} and Gilet
\cite{Gilet14} are discrete dynamical systems, which are often more accessible to rigorous analysis. In this paper 
we prove some of the properties observed and conjectured for Gilet's model; namely, the existence of N--S 
bifurcations. Similar dynamical systems analysis is likely in future to prove additional results concerning 
bifurcations, chaos, structural stability, and other properties. Proving these properties is not only important 
mathematically, but may also provide insight into the phenomena observed for walkers and for their continuous 
models.

We proved Gilet's conjecture \cite{Gilet14} about the existence of supercritical N--S bifurcations when varying the 
parameter $\mu$. In doing so, we are also able to prove the existence of subcritical N--S bifurcations in the 
parameter $\mu$, which had been missed by the original simulations. We then varied the parameter $C$, which 
had not been studied before, and proved the existence of N--S bifurcations there as well.

In order to verify the validity of our theory, the model was simulated using test functions for $\Psi$.  For the sake 
of consistency the test functions used were of the same form as \cite{Gilet14}, but the shapes were changed by 
varying $\beta$. By doing so were able to analyze the transition between supercritical and subcritical N--S 
bifurcations (Fig. \ref{Fig: d}) for both $\mu$ and $C$. We reproduced one of the simulations done in 
\cite{Gilet14} in order to show consistency between the two works.  Then we ran additional simulations of the 
various cases in our theorems, all of which demonstrate complete agreement between our results and the 
numerics.

In addition, we observed evidence of more exotic phenomena, which seem to lead to chaos.  One such curious 
phenomenon is a homoclinic-like bifurcation when the invariant circles get ever closer to the stable manifold of a 
neighboring saddle fixed point.  This also appears to give rise to other invariant circles leading to cascading N--S 
bifurcations wherein the invariant circles change stability and give birth to twin stable invariant circles.

Finally, while not the focus of this work, Gilet, as did we, observed interesting statistics for the position of the 
iterates.  We notice that after the onset of chaos the iterates prefer certain positions, namely along the stable 
manifolds of each saddle fixed point.  In \cite{Gilet14}, Gilet studies the statistics in detail through the use of 
approximate probability density functions. It may be possible to obtain a more analytical characterization using 
measurable dynamics theory, which is a problem worthy of future investigation.  He writes, \textquotedblleft 
Future work could include the search for wave-particle coupling dynamics that yields
$\text{PDF}(x) \sim |\Psi(x)|^2$ as observed in quantum mechanics.\textquotedblright
We agree that this would 
be a fruitful endeavor and perhaps studying the dynamics of the system further will provide insight into the right 
choice of wave-particle coupling.

\section{Acknowledgement}

The authors would like to thank John Bush, Ivan Christov, and David Shirokoff for fruitful discussions and feedback. 
And special thanks are due Anand Oza, Tristan Gilet, and Roy Goodman for lengthy discussions and exchange of 
countless ideas. Finally, thanks are due the referees for their insightful constructive criticism that led to this 
improvement of the original manuscript.

\bibliographystyle{unsrt}
\bibliography{Bouncing_droplets}
\nocite{*}

\end{document}